\documentclass[letterpaper, 11pt]{article}
\pdfoutput=1
\usepackage[left=1in, right=1in, top=1in, bottom=1in]{geometry}
\usepackage[normalem]{ulem}
\usepackage{graphicx}
\usepackage{cite}
\usepackage{bm}
\usepackage{amsmath}
\usepackage{amssymb}
\usepackage{amsfonts}
\usepackage{amsthm}
\usepackage{cancel}
\usepackage{comment}
\usepackage{mathtools}
\usepackage[pagebackref]{hyperref}
\backrefparscanfalse
\renewcommand*{\backref}[1]{
\ifx\relax#1\relax
\else \!\!\!\!(#1) \fi}
\usepackage{array}
\usepackage{braket}
\usepackage{wrapfig}
\usepackage{dsfont}
\usepackage{tikz}
\usetikzlibrary{calc}
\usepackage{pgfplots}
\usepgfplotslibrary{fillbetween}
\usepackage[export]{adjustbox}
\usepackage{caption}
\usepackage{subcaption}

\usepackage{ifthen}
\usetikzlibrary{calc, decorations.pathmorphing,arrows.meta,backgrounds}

\hypersetup{colorlinks=true, urlcolor=[rgb]{0,0,0.5}, citecolor=[rgb]{0.5,0,0}, linkcolor=[rgb]{0,0,0.4}}
\pgfplotsset{compat=1.18}

\newtheorem{theorem}{Theorem}
\newenvironment{itheorem}[2]
{\vspace{0.5\baselineskip} \noindent \textbf{#1}~(#2)\textbf{.}~\itshape }{}
\newtheorem{definition}{Definition}
\newtheorem{lemma}{Lemma}

\newtheorem{proposition}{Proposition}

\newcommand{\AdS}{\mathrm{AdS}}

\def\autorefapp#1{\hyperref[#1]{Appendix~\ref{#1}}}

\renewcommand{\title}[1]{\vbox{\center\bf{\Large #1}}\vspace{5mm}}
\renewcommand{\author}[1]{\vbox{\center{#1}}\vspace{5mm}}
\newcommand{\address}[1]{\vbox{\center\em#1}}

\newcommand\emails[1]{\begingroup\renewcommand\thefootnote{}\footnote{#1}\addtocounter{footnote}{-1}\endgroup}

\def\vphi{\varphi}
\def\tr{{\rm tr}}

\def\R{\mathbb{R}}

\def\iden{\mathbb{I}}

\def\and{\quad {\rm and} \quad}

\def\ra{\rightarrow}
\def\nn{\nonumber\\}

\def\ketbra#1{ |{#1}\rangle\!\langle{#1}| }

\DeclareMathOperator*{\Ex}{\mathbb{E}}

\def\dist{\mathcal{D}}

\def\dA{D_A}
\def\da{D_a}
\def\dB{D_B}
\def\db{D_b}

\numberwithin{equation}{section}
\allowdisplaybreaks

\usepackage{qi_diagrams}

\def\ra{r_{A_1/A}}
\def\rb{r_{B_1/B}}
\def\rbb{r_{B_1\cup B_2/B}}
\def\rbm{r_{\cup_{i\neq m} B_i/B}}

\begin{document}

\title{Quantum Complexity Dynamics for Disjoint Subsystems}

\author{Yale Fan,${}^a$ Nicholas Hunter-Jones,${}^{b,c}$ Andreas Karch,${}^b$ Minju Kum,${}^b$ Shivan Mittal${}^b$}

\address{
${}^a$Department of Physics, University of Idaho, Moscow, ID 83844 \\[6pt]
${}^b$Department of Physics, University of Texas at Austin, Austin, TX 78712 \\[6pt]
${}^c$Department of Computer Science, University of Texas at Austin, Austin, TX 78712
}
\emails{\hspace*{-5mm} Emails: \href{mailto:yalefan@gmail.com}{\tt yalefan@gmail.com}, \{\href{mailto:nickrhj@utexas.edu}{\tt nickrhj},
\href{mailto:karcha@utexas.edu}{\tt karcha},
\href{mailto:minjukum@utexas.edu}{\tt minjukum},
\href{mailto:shivan@utexas.edu}{\tt shivan}\}{\tt @utexas.edu}.
}

\begin{abstract}
Quantum complexity has emerged as a natural probe of chaos, thermalization, and the black hole interior on timescales long after local observables have equilibrated. However, its time evolution has been studied almost exclusively in subsystems confined to a single connected region. We show, using complementary tools from holography and random quantum circuits, that noncontiguous subsystems composed of multiple disjoint regions give rise to qualitatively new physics compared to the contiguous case. First, at finite temperature, a subsystem occupying less than half of the total system can carry high complexity at late times, even as the complexity of its larger complement remains low. This inversion of the usual hierarchy is intrinsically thermal: it vanishes in the infinite-temperature limit, which is the regime modeled by random quantum circuits. Second, if a subsystem's complexity equilibrates at an early time, then fragmenting it into $m$ disjoint components can further reduce this timescale by a factor of $m$, a phenomenon we exhibit in both holography and random quantum circuits. These results not only sharpen the correspondence between geometric and computational notions of complexity, but also motivate the search for novel complexity phenomena in quantum dynamics.

\end{abstract}

\newpage

\tableofcontents

\newpage

\section{Introduction}

Holographic complexity is often motivated by its ability to track the time evolution of black hole interiors long after entanglement entropies have saturated \cite{Susskind:2014rva, Stanford:2014jda, Brown:2015bva, Brown:2015lvg, Carmi:2016wjl}.
For a thermofield double state, dual to a two-sided black hole geometry, the complexity of the full state grows with the volume of the Einstein--Rosen (ER) bridge for times that are exponentially long in the entropy $S$ \cite{Maldacena:2001kr, Hartman:2013qma, Stanford:2014jda}.
Restricting attention to a subsystem of the full state makes the story considerably richer. A reduced density matrix may look locally thermal, but the corresponding entanglement wedge may or may not contain a portion of the bridge. 
Therefore, the volume associated with that wedge, and hence the subsystem complexity, can change discontinuously with the topology of the Ryu--Takayanagi (RT) or Hubeny--Rangamani--Takayanagi (HRT) surface \cite{Ryu:2006bv, Ryu:2006ef, Hubeny:2007xt, Alishahiha:2015rta, Agon:2018zso}.

In previous work \cite{fan2025sharptransitionssubsystemcomplexity}, we exploited this basic phenomenon to identify several \emph{universal} sharp transitions in the time dependence of subsystem complexity driven by the reorganization of the entanglement wedge. 
These transitions are universal in the sense that they are expected to manifest in all chaotic quantum systems and not only in those with semiclassical holographic duals. 
We verified this expectation by proving that similar subsystem complexity transitions occur in random quantum circuits on $n$ qudits with local dimension $q$ such that $S \sim n \log q$, which capture generic features of chaotic quantum systems. Specifically, we demonstrated a sharp complexity transition as a function of subsystem size at exactly half the total system size.
Below half system size, the complexity of a subsystem peaks after a time at most polynomial in the number of qudits $n$, and then drops down to the low complexity of the maximally mixed state. 
Above half system size, subsystem complexity grows linearly for a time exponentially long in $n$, resulting in exponentially high complexities at late times. 
This behavior persists even in models that use alternative definitions of quantum complexity \cite{Haah:2025hyf}.

The results in \cite{fan2025sharptransitionssubsystemcomplexity, Haah:2025hyf} were obtained under the assumption that the subsystems are contiguous.
The goal of this work is to identify novel phenomena that arise when we allow the subsystems to be noncontiguous: that is, to consist of several disjoint regions. 
For contiguousness to be a meaningful property, the underlying system must possess some notion of locality, as both holographic theories and random quantum circuits do. 
Earlier studies of disjoint boundary subsystems \cite{Hubeny:2013gta, Ben-Ami:2014gsa, Abt:2017pmf} examined static ``geometric'' phase transitions in both one-sided and two-sided black hole backgrounds rather than time dependence, while the systematic study of complexity for disjoint regions is a nascent subject \cite{Caputa:2026ldd, Fujiki:2026ucr}. We will see that changing the topology (connectedness) and the geometry (size) of a subsystem can both have drastic effects on the corresponding complexity transitions in space and time. 
Unlike in our earlier work \cite{fan2025sharptransitionssubsystemcomplexity}, which allowed for arbitrary spacetime dimensions, here, we mostly limit ourselves to the special case of two-dimensional boundary theories with three-dimensional bulk duals, where calculations can be done most explicitly.\footnote{If necessary, we denote by $d$ and $d + 1$ the number of boundary and bulk spacetime dimensions, respectively.}

We present two main results for holographic subsystem complexity. 
Our first result is a particularly counterintuitive effect. 
Let $A$ and $B$ be complementary boundary subsystems of sizes $|A|$ and $|B|$, respectively, such that $|A|<|B|$.
Ordinary intuition, and every previously studied case, suggests that the smaller subsystem should be the simpler one: the late-time complexity of subsystem $A$ should at best be polynomial in the entropy $S$, while that of subsystem $B$ can be exponentially large in $S$. 
Holographically, one would expect the entanglement wedge of the larger subsystem to always include the ER bridge. 
This intuition fails for noncontiguous regions. 
By distributing $A$ into several pieces around the spatial circle, one can arrange for $A$ to have exponentially large complexity while $B$ retains small complexity. 
In the gravitationally dual bulk, this is reflected in the fact that the dominant RT surface places the ER bridge within the entanglement wedge of $A$ despite $A$ being the smaller subsystem. 
In the infinite-temperature limit, this effect disappears and the larger subsystem always has exponentially large complexity. 
Since random quantum circuits are only expected to capture this infinite-temperature limit, we expect them to be blind to the counterintuitive properties of the finite-temperature holographic system. 
Intuitively, random quantum circuit states resemble permutation-invariant (as an ensemble) Haar-random states at the cost of an error that diminishes with circuit depth. 
Therefore, the late-time complexities of contiguous and noncontiguous subsystems of equal total size are similar. 
Indeed, the complexity lower bounds of \cite{fan2025sharptransitionssubsystemcomplexity} continue to hold regardless of the topology of the subsystems.

Our second main result concerns the timescale at which complexity saturates for the subsystem with the smaller late-time complexity, which in the infinite-temperature limit is just the smaller subsystem. 
As in the contiguous case, the complexity rises and subsequently drops. 
The novelty in the noncontiguous case is that the complexity collapse time for a system composed of $m$ equal-size, disjoint intervals is shorter by a factor of $m$ compared to the contiguous case ($m=1$).
We derive this phenomenon in holographic systems as well as in random quantum circuits.

Furthermore, we holographically derive the quantum-mechanical prediction from \cite{fan2025sharptransitionssubsystemcomplexity} about the scale of the complexity collapse in rapidly equilibrating subsystems. Then, as a corollary to our second result, we show that in the case of $m$ equal-size, disjoint intervals, the maximum complexity attained before collapse is reduced by the same factor.

In \autoref{subsec:hol_setup}--\autoref{subsec:hol_six}, we derive the holographic subsystem complexity for subsystems with two and three disjoint intervals on each boundary in various temperature limits and for arbitrary ratios of subintervals. In \autoref{subsec:hol_msubi}, we study how to decrease the fractional size of a subsystem by increasing the number of subintervals $m$ while retaining exponential complexity. In \autoref{subsec:hol_time}, we derive our second holographic result about the thermalization time for subsystems composed of $m$ equal-size subintervals, which we later prove rigorously for random quantum circuits in \autoref{sec:qi}. In \autoref{sec:qtohol}, we holographically derive the order of the complexity collapse in rapidly equilibrating subsystems. In \autoref{sec:discuss}, we collect our observations and discuss open questions.

\section{Disjoint Subsystems in Holography} \label{sec:hol}

\subsection{Holographic Setup} \label{subsec:hol_setup}

Our model system is a two-dimensional CFT on a spatial circle of circumference $\ell$ in the thermofield double state at inverse temperature $\beta$, dual to an eternal BTZ black hole \cite{Banados:1992wn}.  Boundary subregions are taken to be unions of intervals on a single boundary, with the same subsystem chosen on both sides when discussing the time-dependent thermofield double geometry.\footnote{We argue that having identical subsystems on both sides is a good model for a random quantum circuit because it preserves the symmetry of the thermofield double state. The large-$N$ limit in holography corresponds to the limit of large local dimension $q$ in quantum mechanics; correspondingly, the complexity curves that we derive may be smooth at finite $N$ but become sharp at $N = \infty$.}  Thus each time slice of our two-sided holographic model consists of two circles that are divided into identical segments and connected by a wormhole.  We partition the circle into complementary subsystems $A$ and $B$:
\begin{equation}
A = \bigcup_{i=1}^m A_i, \qquad B = \bigcup_{i=1}^m B_i,
\end{equation}
where the intervals $A_i$ and $B_i$ alternate around the circle:
\begin{equation}
A_1, B_1, A_2, B_2, \ldots, A_m, B_m.
\end{equation}
The total number of intervals on each circle is therefore \(2m\), while \(m\) is the number of connected components in both \(A\) and \(B\). We let $p$ denote the fraction of the boundary occupied by subsystem $A$.

The entanglement entropy associated with a single interval of length $x$ in a CFT on the line at finite temperature---which is proportional to the length of the corresponding RT surface \cite{Ryu:2006bv, Ryu:2006ef}---is
\begin{equation}
f(x)\equiv \frac{c}{3}\log\left(\frac{\beta}{\pi\epsilon}\sinh\frac{\pi x}{\beta}\right),
\label{eq:f-def}
\end{equation}
where we have made the UV cutoff (lattice spacing) $\epsilon$ explicit and omitted a non-universal additive constant \cite{Calabrese:2004eu}. This formula continues to hold on the circle, with the only difference being that in the black hole phase, one must also allow for an RT surface going the other way around the black hole horizon.  In the cases below, this is implemented by replacing a potentially large interval length \(x\) by \(\min(x,\ell-x)\) when appropriate.

Following the ``Complexity = Volume'' (CV) proposal for subregions \cite{Alishahiha:2015rta, Agon:2018zso}, the complexity of subsystem $A$ is
\begin{equation}
\mathcal{C}_V(A) = \frac{1}{8\pi G_N L_{\AdS}}\max_{\Sigma_A}\operatorname{Vol}(\Sigma_A),
\label{eq:subregion-cv}
\end{equation}
where \(\Sigma_A\) is a bulk codimension-one slice in the entanglement wedge of \(A\) with boundary
\begin{equation}
\partial \Sigma_A = A\cup \Gamma_A
\end{equation}
and \(\Gamma_A\) is the RT/HRT surface homologous to \(A\).  For static RT surfaces outside the horizon, the maximal slice lies at constant Schwarzschild time and the volume can be computed directly by integrating over the spatial region bounded by \(A\) and \(\Gamma_A\).  For HRT surfaces crossing the bridge, this is no longer true because the surface \(\Sigma_A\) must be varied independently.

Before discussing the space of possibilities for interval sizes and arrangements, we give an illustrative example of the most striking prediction from holography: situations where the complexity of a subsystem with $p < 1/2$ exhibits exponentially long growth while that of its \emph{larger} complementary subsystem saturates quickly.  This effect occurs exclusively at finite temperature and for disjoint subsystems.  Let $B$ (the larger system) consist of one interval of length $\ell/2 - \delta$ and one interval of length $2\delta$, where $\delta$ is small but finite.  Let $A$ consist of two intervals of length $(\ell/2 - \delta)/2$.  For fixed $\beta$ and sufficiently small $\delta$, we have
\begin{equation}
\underbrace{\left(\sinh\frac{\pi(\ell/2 - \delta)}{2\beta}\right)^2}_{O(1)}\geq \underbrace{\sinh\frac{\pi(\ell/2 - \delta)}{\beta}}_{O(1)}\underbrace{\sinh\frac{2\pi\delta}{\beta}}_{O(\delta)}.
\label{fourintervalexample}
\end{equation}
In light of \eqref{eq:f-def}, this implies that there exists a range of $\delta$ for which the RT surface homologous to $B$ has smaller area than the RT surface homologous to $A$, despite that $B$ is larger.  In this range, the entanglement wedge of $A$ contains the black hole.  See \autoref{fig:B<1/2}.  This effect disappears in the limit of infinite temperature ($\beta\to 0$), where the $\sinh$ functions become exponentials and the complexity transition occurs strictly at $p = 1/2$.

In holography, it is always the case that if a subsystem exhibits exponentially long complexity growth, then its complement does not.  This is because, at late times, the dominant (minimal-area) RT surfaces are the disconnected ``caps'' that do not cross the horizon; the only question is whether the black hole (and hence the ER bridge with its exponentially long growth) lies on the ``inside'' or the ``outside'' of those caps.  Note also that if an RT surface is anchored to a boundary interval of size $> \ell/2$, then it automatically contains the black hole.  Therefore, a subsystem that contains an interval of size $> \ell/2$ always has exponential complexity, even if the area of the late-time RT surface homologous to it is smaller than that of its complement.  See \autoref{fig:B>1/2}.  This explains why, in the ``exotic'' example of \eqref{fourintervalexample}, we chose all intervals to have length $< \ell/2$.

\subsection{Four Intervals: \texorpdfstring{$m = 2$}{m = 2}} \label{subsec:hol_four}

\subsubsection{Setup and Definitions}

We start with the simplest case, $m = 2$.  We label the intervals sequentially by $A_1, B_1, A_2, B_2$, so that $A = A_1\cup A_2$ and $B = B_1\cup B_2$.  Since a configuration of $2m$ intervals on the circle is specified by $2m - 1$ parameters, our problem involves three free parameters, which we call $p, \ra, \rb$.  Without loss of generality, we make the following assumptions:\footnote{For simplicity, we only state the inequalities below as strict inequalities; the endpoints require a careful case-by-case study.}
\begin{itemize}
\item Let subsystem $A$ be smaller than subsystem $B$:
\begin{equation}
    \ell_A = p\ell, \qquad \ell_B = (1 - p)\ell, \qquad \ell_A + \ell_B = \ell,
\end{equation}
where $0 < p < 1/2$.
\item Let interval $A_1$ be smaller than interval $A_2$:
\begin{equation}
    \ell_{A_1} = \ra\ell_A, \qquad \ell_{A_2} = (1 - \ra)\ell_A, \qquad \ell_{A_1} + \ell_{A_2} = \ell_A,
\end{equation}
where $0 < \ra < 1/2$.
\item Let interval $B_1$ be smaller than interval $B_2$:
\begin{equation}
    \ell_{B_1} = \rb\ell_B, \qquad \ell_{B_2} = (1 - \rb)\ell_B, \qquad \ell_{B_1} + \ell_{B_2} = \ell_B,
\end{equation}
where $0 < \rb < 1/2$.
\end{itemize}
In summary:
\begin{equation}
    \ell_{A_1} = p\ra\ell, \qquad \ell_{A_2} = p(1 - \ra)\ell, \qquad \ell_{B_1} = (1 - p)\rb\ell, \qquad \ell_{B_2} = (1 - p)(1 - \rb)\ell.
\end{equation}
Only interval $B_2$ can have length greater than $\ell/2$.

In general, the entanglement entropy of a noncontiguous region in a 2D CFT is a complicated function of the interval sizes and locations \cite{Calabrese:2009ez}. In the case of a holographic CFT (i.e., a CFT with large central charge $c$), the entanglement entropy simplifies to a sum of single-interval terms of the form \eqref{eq:f-def} to leading order in $1/c$ \cite{Hubeny:2007re, Headrick:2010zt, Hartman:2013mia}. However, there are often many candidate configurations of RT surfaces \cite{Hubeny:2013gta, Ben-Ami:2014gsa, Abt:2017pmf}. In the $m=2$ case, the situation is still fairly simple in that there are only two candidates, namely the union of the RT surfaces that one would naturally associate with either $A_1$ and $A_2$ (``$A$ caps'') or $B_1$ and $B_2$ (``$B$ caps''). These two candidates are illustrated in Figures \ref{fig:B<1/2} and \ref{fig:B>1/2}. The ``correct'' RT surface is the one associated with the smaller entropy.

\begin{figure}[!htb]
\centering
\begin{subfigure}{0.4\textwidth}
    \centering
    \includegraphics[width=.8\linewidth]{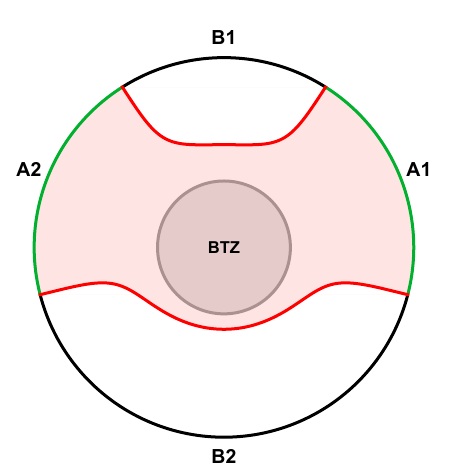}
\end{subfigure}
\begin{subfigure}{0.4\textwidth}
    \centering
    \includegraphics[width=.8\linewidth]{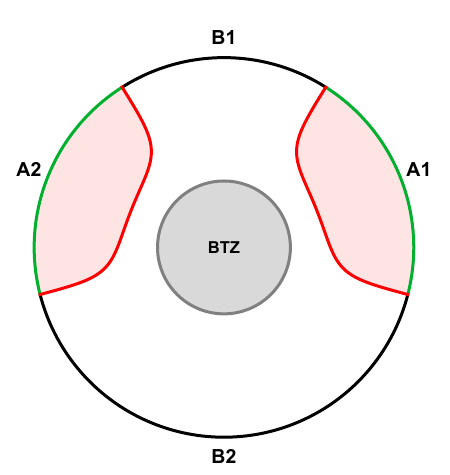}
\end{subfigure}
\caption{Two candidate RT surface configurations. All intervals have size $<\ell/2$. Left: the ``$B$ caps'' have smaller area, and the connected entanglement wedge of $A$ (colored in red) contains the wormhole despite that $\ell_A < \ell/2$. Right: the ``$A$ caps'' have smaller area, and the entanglement wedge of $A$ does not contain the wormhole.}
\label{fig:B<1/2}
\end{figure}

\begin{figure}[!htb]
\centering
\begin{subfigure}{0.4\textwidth}
    \centering
    \includegraphics[width=.8\linewidth]{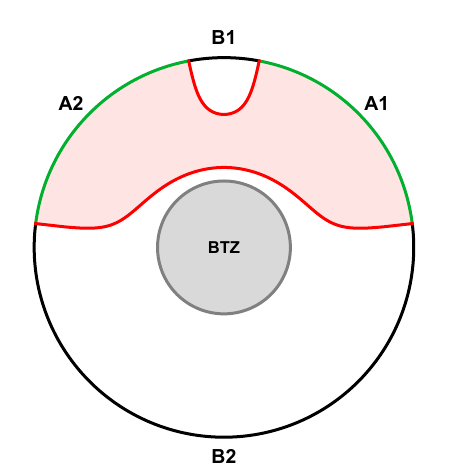}
\end{subfigure}
\begin{subfigure}{0.4\textwidth}
    \centering
    \includegraphics[width=.8\linewidth]{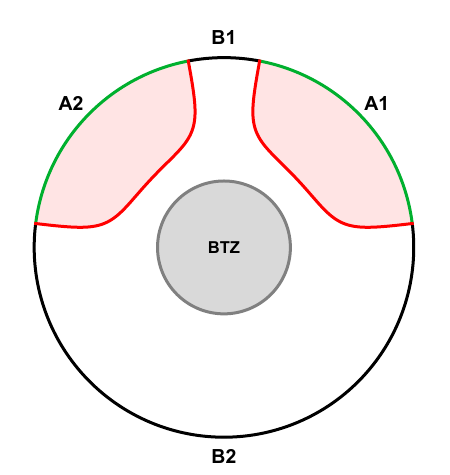}
\end{subfigure}
\caption{The case where the complementary subsystem $B$ contains an interval of size $> \ell/2$, namely $B_2$. Left: the ``$B$ caps'' have smaller area. Right: the ``$A$ caps'' have smaller area. In either case, the entanglement wedge of $A$ does not contain the wormhole.}
\label{fig:B>1/2}
\end{figure}

Let us denote the resulting entanglement entropies of the two candidate RT surfaces by
\begin{equation}\label{eq:SA-SB-def}
    S_A = f(\ell_{A_1}) + f(\ell_{A_2}), \qquad S_B = f(\ell_{B_1}) + f(\min(\ell_{B_2}, \ell - \ell_{B_2})),
\end{equation}
with $f(x)$ as in \eqref{eq:f-def}. To determine which of the two subsystems has the exponentially large late-time complexity, we need to answer two questions. First, which one is smaller, $S_A$ or $S_B$? This identifies the correct entanglement wedges. Second, which entanglement wedge contains the black hole and hence the ER bridge, leading to exponentially large complexity? In the particular case displayed in \autoref{fig:B>1/2}, the black hole lies in the entanglement wedge of $B$ irrespective of which RT surface wins. This happens whenever the length of $B_2$ is larger than $\ell/2$. So here, we always see the ``normal'' behavior that the larger system $B$ has the large complexity. The novel situation in which the smaller system $A$ has the large complexity arises when $S_A > S_B$ while $\ell_{B_2} < \ell/2$, as displayed in \autoref{fig:B<1/2}. In that case, the black hole with its ER bridge and exponentially large complexity lies in the entanglement wedge of the smaller system $A$.

So when is $S_A > S_B$?  The condition $S_A > S_B$ is equivalent to
\begin{equation}
    \sinh\frac{\pi\ell_{A_1}}{\beta}\sinh\frac{\pi\ell_{A_2}}{\beta} > \sinh\frac{\pi\ell_{B_1}}{\beta}\sinh\frac{\pi\min(\ell_{B_2}, \ell - \ell_{B_2})}{\beta}.
\label{thecondition}
\end{equation}
Before proceeding, note the following.  For any nonnegative real number $a$, let
\begin{equation}
    g_a(x)\equiv \sinh(ax)\sinh(a(1 - x)),
\end{equation}
which achieves a global maximum at $x = 1/2$:
\begin{equation}
    g_a(1/2) = \sinh^2(a/2).
\end{equation}
For any nonnegative real numbers $a, b$ with $a > b$, we have
\begin{equation}
    g_a(x)\geq g_b(x)
\label{gagb}
\end{equation}
for all $x\in [0, 1]$, with equality occurring at the endpoints $x = 0, 1$.  In fact, it is convenient to restrict the domain of $g_a(x)$ to $x\in [0, 1/2]$.  

We need to treat the cases $\ell_{B_2}\leq \ell/2$ and $\ell_{B_2} > \ell/2$ separately. In terms of $p$, $\ra$, $\rb$, we consider two cases:
\begin{itemize}
\item If $\ell_{B_2}\leq \ell/2$, then $S_A > S_B$ is equivalent to
\begin{equation}
\sinh\frac{p\ra\pi\ell}{\beta}\sinh\frac{p(1 - \ra)\pi\ell}{\beta} > \sinh\frac{(1 - p)\rb\pi\ell}{\beta}\sinh\frac{(1 - p)(1 - \rb)\pi\ell}{\beta},
\end{equation}
or
\begin{equation}
g_{p\pi\ell/\beta}(\ra) > g_{(1 - p)\pi\ell/\beta}(\rb).
\end{equation}
So we must take
\begin{equation}\label{eq:rbounds}
\frac{1 - 2p}{2(1 - p)}\leq \rb < g_{(1 - p)\pi\ell/\beta}^{-1}(g_{p\pi\ell/\beta}(\ra)),
\end{equation}
where the lower bound comes from $\ell_{B_2}\leq \ell/2$.  In the expression for the upper bound, the inverse exists by the property \eqref{gagb} and the restriction of the domain to $[0, 1/2]$.
\item If $\ell_{B_2}>\ell/2$, then $S_A > S_B$ is equivalent to
\begin{equation}
    \sinh\frac{p\ra\pi\ell}{\beta}\sinh\frac{p(1 - \ra)\pi\ell}{\beta} > \sinh\frac{(1 - p)\rb\pi\ell}{\beta}\sinh\frac{(p + (1 - p)\rb)\pi\ell}{\beta},
\end{equation}
or (noting that $g_a(x/a) = \sinh(x)\sinh(a - x)$ and extending the domain of $g_a$)
\begin{equation}
    g_{p\pi\ell/\beta}(\ra) > -g_{p\pi\ell/\beta}((1 - 1/p)\rb).
\end{equation}
By taking $\rb$ sufficiently small, we can always achieve this inequality.  We must also respect the inequality
\begin{equation}
    \rb < \frac{1 - 2p}{2(1 - p)},
\end{equation}
which we do anyway by taking $\rb$ sufficiently small.
\end{itemize}
To summarize, for fixed $p$ and $\ra$, we can \emph{always} achieve $S_A > S_B$ by taking $\rb$ sufficiently small.

Finally, we observe that in the infinite-temperature limit ($\beta\to 0$), the condition \eqref{thecondition} becomes impossible to satisfy, which corresponds to the standard situation where the larger system has the exponentially large complexity. This can also be seen directly from \eqref{eq:f-def}:
\begin{equation}
f(x) \xrightarrow{\beta \rightarrow 0} \frac{c}{3} 
\log \left ( \frac{\beta}{2 \pi \epsilon} e^{\pi x/\beta} \right ) \sim \frac{c \pi x}{3 \beta} + \text{constant}.
\end{equation}
That is, at high temperature, the candidate RT surface area scales linearly with the length of the interval, so the correct RT surface is \emph{always} the one in which the disconnected surfaces are caps on the smaller subsystem. This means that the larger system always has the connected entanglement wedge and hence contains the ER bridge with its exponentially large late-time complexity.

\subsubsection{Full Four-Interval Parameter Space}

Having established that it is possible for the smaller subsystem to have exponentially large complexity, we complete the study of the $m=2$ case by mapping out the full parameter space. Again, we assume that $A$ is the smaller system ($p<1/2$), and we focus on the ``interesting'' regime where $\ell_{B_2}\leq\ell/2$.

We have found that the range of $\rb$ ($=\ell_{B_1}/\ell_B$) for which $S_B<S_A$ is as follows:
\begin{equation}\label{eq:4int-bounds}
    \frac{1-2p}{2(1-p)}\leq \rb < g^{-1}_{(1-p)\pi \ell/\beta}(g_{p\pi\ell/\beta}(\ra)),
\end{equation}
where the lower bound comes from $\ell_{B_2}\leq\ell/2$ and the upper bound comes from $S_B<S_A$. In \autoref{fig:q=1/2}, we plot the lower and upper bounds for $\rb$ as functions of $p$ at different temperatures, setting $\ra=1/2$. We see that as the temperature increases, the allowed region shrinks, which is consistent with the fact that $S_B<S_A$ becomes impossible to satisfy in the infinite-temperature limit $\beta\rightarrow 0$. However, decreasing the temperature does not expand the allowed region all the way to the left. Rather, there exists a limiting upper bound in the low-temperature limit.

Also, notice that the upper bound in \eqref{eq:4int-bounds} is a monotonically increasing function of $\ra\leq1/2$. Thus, whatever the upper-bounding function (of $p$) is, $\ra$ will only affect the height of the function. Compare \autoref{fig:q=1/2} (where $\ra=1/2$) and \autoref{fig:q=1/10} (where $\ra=1/10$). We see that as $\ra$ decreases (i.e., as $\ell_{A_1}$ and $\ell_{A_2}$ become more asymmetric), the allowed regions shrink and the bounds become harder to satisfy.

\begin{figure}[!htb]
    \centering
    \includegraphics[width=0.8\linewidth]{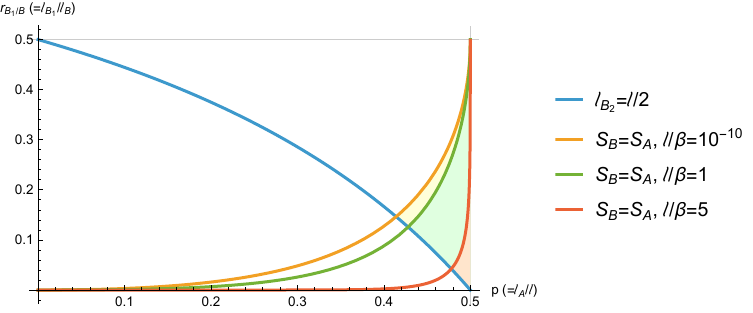}
    \caption{Lower and upper bounds for $\rb$ as functions of $p$, where $\ra=1/2$. The blue line is the lower bound. The yellow, green, and orange lines are the upper bounds for $\ell/\beta=10^{-10}$, $\ell/\beta=1$, and $\ell/\beta=5$, respectively. The allowed regions in the $(p,\rb)$ parameter space are colored accordingly. For example, when $\ell/\beta=1$, the allowed region is the union of the green and orange regions. As the temperature increases, the allowed region shrinks.}
    \label{fig:q=1/2}
\end{figure}

\begin{figure}[!htb]
    \centering
    \includegraphics[width=0.8\linewidth]{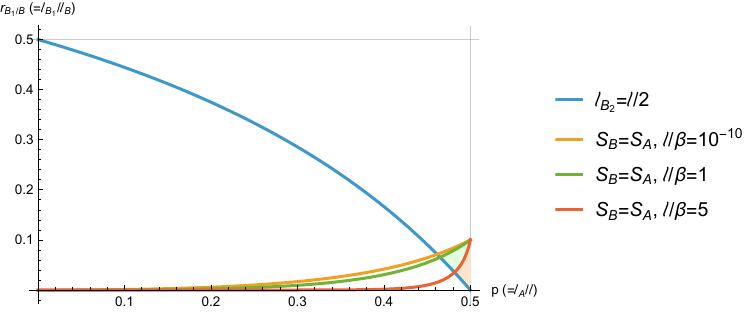}
    \caption{Same as \autoref{fig:q=1/2}, but with $\ra=1/10$. As $\ra$ decreases, the allowed region shrinks.}
    \label{fig:q=1/10}
\end{figure}

\subsubsection{Analytics at Limiting Temperatures}
\label{subsec:analyticstemp}

At limiting temperatures, we can obtain analytic expressions for the upper bounds. First, let us examine the low-temperature limit.\footnote{This limit is only relevant in $(2+1)$D, where black hole solutions exist all the way down to zero temperature (although they might not be thermodynamically preferred). In higher dimensions, there exists a minimum temperature (depending on $d$) below which black hole solutions do not exist \cite{Hawking:1982dh}. Below the Hawking-Page temperature, the bulk is given by two decoupled AdS spacetimes, so the complexity is always low even for the whole system (there is no ER bridge that could grow).} To do this, recall that in \autoref{fig:B<1/2}, the two candidate entropies are
\begin{align}
S_A &= \frac{c}{3}\left[\log\left(\frac{\beta}{\pi\epsilon}\sinh\frac{\pi\ell_{A_1}}{\beta}\right)+\log\left(\frac{\beta}{\pi\epsilon}\sinh\frac{\pi\ell_{A_2}}{\beta}\right)\right], \label{eq:4int-entropies-A} \\
S_B &= \frac{c}{3}\left[\log\left(\frac{\beta}{\pi\epsilon}\sinh\frac{\pi\ell_{B_1}}{\beta}\right)+\log\left(\frac{\beta}{\pi\epsilon}\sinh\frac{\pi\ell_{B_2}}{\beta}\right)\right]. \label{eq:4int-entropies-B}
\end{align}
In the low-temperature limit $\ell/\beta\rightarrow 0$, we can approximate $\sinh x\to x$, so
\begin{align}
    S_A\approx \frac{c}{3}\log\left(\frac{\ell_{A_1}\ell_{A_2}}{\epsilon^2}\right),\\
    S_B\approx \frac{c}{3}\log\left(\frac{\ell_{B_1}\ell_{B_2}}{\epsilon^2}\right).
\end{align}
The upper bound for $\rb$ is given by $S_A=S_B$, which reduces to
\begin{equation}
    \ell_{A_1}\ell_{A_2} =\ell_{B_1}\ell_{B_2}.
\end{equation}
Specializing to $\ra=1/2$ (that is, $A$ is evenly split), the relevant solution for $\rb<1/2$ is
\begin{equation}\label{eq:lowsol}
    \rb=\frac{1}{2}\left(1-\frac{\sqrt{1-2p}}{1-p}\right).
\end{equation}
We plot this solution in \autoref{fig:4interval-lowT}.

\begin{figure}[!htb]
    \centering
    \includegraphics[width=0.75\linewidth]{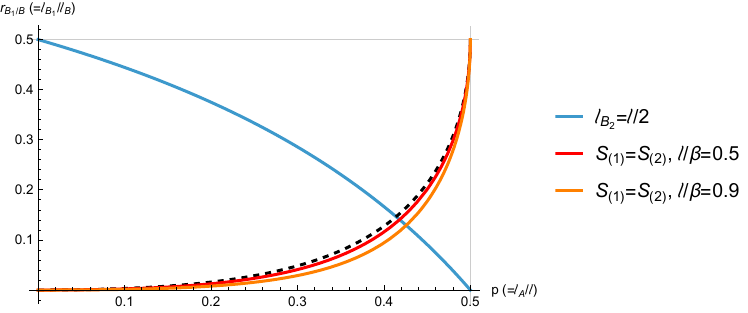}
    \caption{The upper bound for $\rb$ in the low-temperature limit \eqref{eq:lowsol} is marked as a black dashed curve. The upper bound approaches this curve as we lower the temperature.}
    \label{fig:4interval-lowT}
\end{figure}

Now we examine the high-temperature limit. In the high-temperature limit $\ell/\beta\rightarrow \infty$, we have $\sinh x\rightarrow\frac{1}{2}e^x$, so the entropies \eqref{eq:4int-entropies-A} and \eqref{eq:4int-entropies-B} become
\begin{align}
    S_A&\approx\frac{2c}{3}\log\frac{\beta}{2\pi\epsilon}+\frac{c\pi}{3\beta}\left(\ell_{A_1}+\ell_{A_2}\right),\\
    S_B&\approx\frac{2c}{3}\log\frac{\beta}{2\pi\epsilon}+\frac{c\pi}{3\beta}\left(\ell_{B_1}+\ell_{B_2}\right).
\end{align}
Equating $S_A=S_B$ reduces to
\begin{equation}
    \ell_{A_1}+\ell_{A_2}=\ell_{B_1}+\ell_{B_2},
\end{equation}
the solution for which is
\begin{equation}\label{eq:highsol}
    p=\frac{1}{2}.
\end{equation}
We plot this in \autoref{fig:4interval-highT}.

\begin{figure}[!htb]
    \centering
    \includegraphics[width=0.75\linewidth]{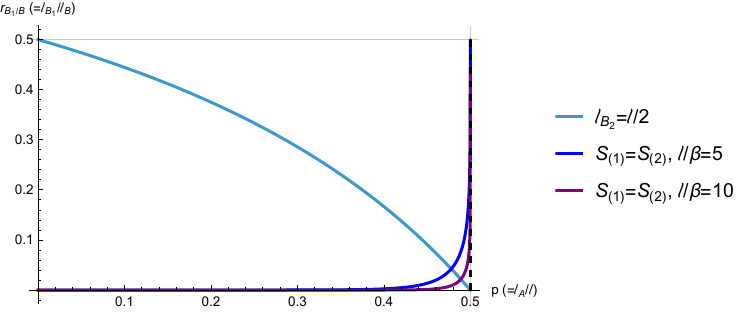}
    \caption{The ``upper bound'' for $\rb$ in the high-temperature limit \eqref{eq:highsol} is marked as a black dashed line. The upper bound approaches this line as we raise the temperature.}
    \label{fig:4interval-highT}
\end{figure}

Finally, consider perturbation around the infinite-temperature limit. Ideally, we would solve the equation $S_A = S_B$ for $\rb$ and obtain an analytic expression for the upper bound, but this is not very useful. Instead, let us read off how the minimum $p$ behaves with temperature $\ell/\beta$. That is, we look at the intersection point of the upper bound and the lower bound. We do this for the symmetric case \(\ra=1/2\). The endpoint of the allowed region occurs when the lower bound \(\ell_{B_2}=\ell/2\) is saturated together with \(S_A=S_B\). Thus
\begin{equation}
    \ell_{B_2}=\frac{\ell}{2},\qquad
    \ell_{B_1}=\left(\frac12-p\right)\ell,\qquad
    \ell_{A_1}=\ell_{A_2}=\frac{p\ell}{2}.
\end{equation}
Writing \(\alpha=\pi\ell/\beta\), the equation \(S_A=S_B\) becomes
\begin{equation}
\sinh^2\left(\frac{\alpha p}{2}\right)
=
\sinh\left[\alpha\left(\frac12-p\right)\right]\sinh\left(\frac{\alpha}{2}\right),
\end{equation}
or, equivalently,
\begin{equation}
2\cosh(\alpha p)-1=\cosh[\alpha(1-p)].
\end{equation}
This gives
\begin{equation}
p_\ast(\alpha)
=
\frac{1}{\alpha}
\log\left[
\frac{e^\alpha+(e^\alpha-1)\sqrt{2e^\alpha}}{2e^\alpha-1}
\right].
\end{equation}
In the high-temperature limit $\alpha\rightarrow\infty$,
\begin{equation}
p_\ast
=
\frac12-\frac{\log 2}{2\alpha}
+\cdots
=
\frac12-\frac{\beta\log 2}{2\pi\ell}
+\cdots.
\end{equation}
Thus, in the four-interval case, the exotic region exists only at finite temperature. As \(\beta\to 0\), its lower endpoint approaches \(p=1/2\), and for any fixed \(p<1/2\), the inequality \(S_A>S_B\) eventually fails.

\subsection{Six Intervals: \texorpdfstring{$m = 3$}{m = 3}} \label{subsec:hol_six}

\subsubsection{Setup and Dominant RT Surface}

We now proceed to study features of the six-interval case. As seen in \autoref{fig:q=1/2}, in the $m=2$ case, the minimum possible $p$ ($=\ell_A/\ell$)---that is, the smallest possible size $A$ can have while still having exponentially large complexity---is obtained as the intersection of the lower bound and the upper bound. We expect that increasing the number of intervals should make it possible to decrease the minimum $p$ further. For six intervals, we find that this is indeed the case.

Comparing \autoref{fig:q=1/2} and \autoref{fig:q=1/10}, we see that the value of the minimum $p$ in the $m=2$ case is lower when the configuration is symmetric ($\ra=1/2$, i.e., $\ell_{A_1}=\ell_{A_2}$). Assuming this property continues to hold in the six-interval case, we focus on the similarly symmetric configuration, where $\ell_{B_1}=\ell_{B_2}\neq\ell_{B_3}$ and $\ell_{A_1}=\ell_{A_2}=\ell_{A_3}$:
\begin{gather}
\ell_A=p\ell,\qquad \ell_B=(1-p)\ell, \notag \\
\ell_{A_1}=\ell_{A_2}=\ell_{A_3}=\frac{1}{3}p\ell, \label{eq:ratios} \\
\ell_{B_1}=\ell_{B_2}=\frac{1}{2}(1-p)\rbb\ell,\qquad \ell_{B_3}=(1-p)(1-\rbb)\ell, \notag
\end{gather}
with $0\leq p\leq 1/2$ and $0\leq \rbb\leq 1$. We illustrate the possible minimal surface configurations in \autoref{fig:six}.\footnote{As the number of intervals increases, the number of extremal surface configurations increases combinatorially as the Catalan numbers (1, 2, 5, 14, 42, \ldots). For six intervals, we have five different extremal surfaces. Among these extremal surface configurations, only the minimal one contributes to the entropy and defines the late-time entanglement wedge.}

\begin{figure}
    \centering
\begin{tikzpicture}[
    scale=1.5,
    font=\small,
    thick,
    rt/.style={black, thick, bend right=35},
    boundary/.style={black, thick}
    ]

\def\angSone{27}
\def\angEone{333}
\def\angStwo{297}
\def\angEtwo{243}
\def\angSthree{207}
\def\angEthree{153}

\newcommand{\drawBaseCircle}[1]{

    \draw[boundary] (\angEthree:\R) arc (\angEthree:-27+360:\R);
    \draw (\angEthree:\R) arc (153:27:\R) node[midway, below] {$B_3$};
    \draw (27:\R) arc (27:-27:\R) node[midway, left] {$A_3$};
    \draw (333:\R) arc (333:297:\R) node[midway, above left] {$B_2$};
    \draw (297:\R) arc (297:243:\R) node[midway, above] {$A_2$};
    \draw (243:\R) arc (243:207:\R) node[midway, above right] {$B_1$};
    \draw (207:\R) arc (207:153:\R) node[midway, right] {$A_1$};

    \node at (0, \R+0.5) {#1};
}

\def\R{1.5}

\begin{scope}[shift={(0,0)},rotate=180]
    \drawBaseCircle{(1)}
    \draw[rt] (\angEthree:\R) to[bend right=40] (\angSone:\R);
    \draw[rt] (\angEone:\R) to[bend right=30] (\angStwo:\R);
    \draw[rt] (\angEtwo:\R) to[bend right=30] (\angSthree:\R);
\end{scope}

\begin{scope}[shift={(5,0)},rotate=180]
    \drawBaseCircle{(2)}
    \draw[rt] (\angSone:\R) to[bend right=30] (\angEone:\R);
    \draw[rt] (\angStwo:\R) to[bend right=30] (\angEtwo:\R);
    \draw[rt] (\angSthree:\R) to[bend right=30] (\angEthree:\R);
\end{scope}

\begin{scope}[shift={(0,-5)},rotate=180]
    \drawBaseCircle{(3)}
    \draw[rt] (\angEthree:\R) to[bend right=40] (\angSone:\R);
    \draw[rt] (\angEone:\R) to[bend right=40] (\angSthree:\R);
    \draw[rt] (\angEtwo:\R) to[bend left=30] (\angStwo:\R);
\end{scope}

\begin{scope}[shift={(5,-5)},rotate=180]
    \drawBaseCircle{(4), (5)}
    \draw[rt] (\angSone:\R) to[bend right=25] (\angEtwo:\R);
    \draw[rt] (\angEone:\R) to[bend right=30] (\angStwo:\R);
    \draw[rt] (\angSthree:\R) to[bend right=30] (\angEthree:\R);
\end{scope}

\end{tikzpicture}
\caption{Schematic illustration of the six-interval RT surface configurations. There are five different diagrams. The lengths are chosen so that $\ell_{B_1}=\ell_{B_2}\neq\ell_{B_3}$ and $\ell_{A_1}=\ell_{A_2}=\ell_{A_3}$. In this symmetric case, the fifth RT configuration is the left-right flipped version of the fourth one, hence the label ``(4), (5).'' The black hole is present but not explicitly drawn because it can lie in different wedges depending on the interval sizes.}
\label{fig:six}
\end{figure}

The various configurations have the following entanglement entropies:
\begin{align}
    S_{(1)}&=\frac{c}{3}\log\left(\frac{\beta}{\pi \epsilon}\sinh{\frac{\pi\ell_{B_1}}{\beta}}\right)+
    \frac{c}{3}\log\left(\frac{\beta}{\pi \epsilon}\sinh{\frac{\pi\ell_{B_2}}{\beta}}\right)+
    \frac{c}{3}\log\left(\frac{\beta}{\pi \epsilon}\sinh{\frac{\pi\ell_{B_3}}{\beta}}\right)\notag\\
    &=\frac{c}{3}\log\left(\sinh{\frac{\pi\ell_{B_1}}{\beta}}\sinh{\frac{\pi\ell_{B_2}}{\beta}}\sinh{\frac{\pi\ell_{B_3}}{\beta}}\right)+C,\notag\\
    S_{(2)}&=\frac{c}{3}\log\left(\sinh{\frac{\pi\ell_{A_1}}{\beta}}\sinh{\frac{\pi\ell_{A_2}}{\beta}}\sinh{\frac{\pi\ell_{A_3}}{\beta}}\right)+C,\label{eq:entropies}\\
    S_{(3)}&=\frac{c}{3}\log\left(\sinh{\frac{\pi\ell_{B_3}}{\beta}}\sinh{\frac{\pi\ell_{A_2}}{\beta}}\sinh{\frac{\pi(\ell_{B_1}+\ell_{A_2}+\ell_{B_2})}{\beta}}\right)+C,\notag\\
    S_{(4), (5)}&=\frac{c}{3}\log\left(\sinh{\frac{\pi\ell_{B_2}}{\beta}}\sinh{\frac{\pi\ell_{A_1}}{\beta}}\sinh{\frac{\pi(\ell_{A_2}+\ell_{B_2}+\ell_{A_3})}{\beta}}\right)+C,\notag
\end{align}
where $C=c\log \frac{\beta}{\pi\epsilon}$. Now there are three cases in which $A$ can contain the wormhole:
\begin{itemize}
    \item \textbf{Case 1: \autoref{fig:six}(1) is the minimal surface.}
    Two conditions should be satisfied:
    \begin{align}
        (i)&\quad \ell_{B_3}<\ell/2 \; \implies \; \tfrac{1-2p}{2(1-p)}<\rbb,\\
        (ii)&\quad S_{(1)} < S_{(2)}, S_{(3)}, S_{(4), (5)}.
    \end{align}

    Condition $(i)$ gives the same lower bound for $r$ as in the four-interval case. Condition $(ii)$ gives three corresponding bounds.

    \item \textbf{Case 2: \autoref{fig:six}(3) is the minimal surface.}
    Three conditions should be satisfied:
    \begin{align}
        (i)&\quad \ell_{B_3}<\ell/2 \; \implies\; \tfrac{1-2p}{2(1-p)}<\rbb,\\
        (ii)&\quad \ell_{B_1}+\ell_{A_2}+\ell_{B_2}<\ell/2\;\implies\; \rbb<\tfrac{1-2p/3}{2(1-p)},\\
        (iii)&\quad S_{(3)} < S_{(1)}, S_{(2)}, S_{(4), (5)}.
    \end{align}
    Conditions $(i)$ and $(ii)$ ensure that the $A_3$-$A_1$ wedge contains the wormhole.
    \item \textbf{Case 3: \autoref{fig:six}(4 or 5) is the minimal surface.} 
    Two conditions should be satisfied:
    \begin{align}
        (i)&\quad \ell_{B_1}+\ell_{A_1}+\ell_{B_3}<\ell/2\;\implies\; \tfrac{1-4p/3}{1-p}<\rbb,\\
        (ii)&\quad S_{(4), (5)} < S_{(1)}, S_{(2)}, S_{(3)}.
    \end{align}
\end{itemize}

It turns out that \textbf{Case 2} and \textbf{Case 3} are impossible. That is, it is impossible for (3), (4), or (5) to be the minimal surface \textit{while} $A$ has the wormhole and hence the exponentially large complexity. We prove this by showing that some lower bounds are always greater than some upper bounds when $A$ contains the wormhole.

Consider \textbf{Case 2}. The condition $S_{(3)}<S_{(4), (5)}$, for example, acts as a lower bound for $\rbb$:
\begin{equation}
    f_\text{lb}(p,\ell/\beta)<\rbb,
\end{equation}
where $f_\text{lb}(p,\ell/\beta)$ is some lower-bounding function. Now plug the upper-bounding function from $(ii)$ into $S_{(3)}$ and $S_{(4), (5)}$:
\begin{align}
S_{(3)}\Big\vert_{\rbb=\frac{1-2p/3}{2(1-p)}}&=\frac{c}{3}\log\left[\sinh\left(\frac{\pi \ell}{\beta}\frac{1}{2}\right)\sinh\left(\frac{\pi \ell}{\beta}\frac{p}{3}\right)\sinh\left(\frac{\pi \ell}{\beta}\frac{3-4p}{6}\right)\right],\notag\\
S_{(4), (5)}\Big\vert_{\rbb=\frac{1-2p/3}{2(1-p)}}&=\frac{c}{3}\log\left[\sinh\left(\frac{\pi \ell}{\beta}\frac{1-2p/3}{4}\right)\sinh\left(\frac{\pi \ell}{\beta}\frac{p}{3}\right)\sinh\left(\frac{\pi \ell}{\beta}\frac{1+2p}{4}\right)\right].
\end{align}
Then, since $\frac{1}{2}\geq\frac{1+2p}{4}$ and $\frac{3-4p}{6}\geq\frac{1-2p/3}{4}$ for all $p\leq 1/2$ (with inequality saturating at $p=1/2$), $S_{(3)}$ is always greater than or equal to $S_{(4), (5)}$. Hence we conclude that $f_\text{lb}(p,\ell/\beta)\geq \frac{1-2p/3}{2(1-p)}$. Since the lower bound is always greater than or equal to the upper bound, \textbf{Case 2} is impossible.
\textbf{Case 3} is proven similarly. The $S_{(4), (5)}<S_{(2)}$ condition acts as an upper bound for $\rbb$. By plugging the lower-bounding function from $(i)$ into $S_{(4), (5)}$ and $S_{(2)}$, one can similarly show that $S_{(4), (5)}>S_{(2)}$ for all $0\leq p\leq 1/2$. Since the lower bound is always greater than the upper bound, \textbf{Case 3} is impossible. Thus, in both \textbf{Case 2} and \textbf{Case 3}, the geometric condition for $A$ to contain the wormhole is incompatible with the corresponding entropy-minimization condition.

Therefore, we only need to look at \textbf{Case 1}. The bounds for low, medium, and high temperature are plotted in \autoref{fig:six-low}, \ref{fig:six-med}, \ref{fig:six-high}, respectively. Note that $r$ ranges from 0 to 1; in the four-interval case, we omitted the upper half due to the $r\leftrightarrow 1-r$ ($\ell_{B_1}\leftrightarrow \ell_{B_2}$) symmetry. The six-interval configuration does not have such an $r\leftrightarrow 1-r$ symmetry, so we need to consider the full range $0<r<1$. We notice several features:
\begin{itemize}
    \item As before, there exist limiting bounds in the zero-temperature limit. These are plotted in \autoref{fig:analytic}. Compare with \autoref{fig:six-low}.
    \item The $S_{(1)}<S_{(2)}$ bound (yellow) exhibits a turning point at $(p,\rbb)=(1/2,2/3)$ at all temperatures. This can be shown by equating $S_{(1)}$ and $S_{(2)}$ at $p=1/2$. This has a simple geometric interpretation: the point $(p,\rbb)=(1/2,2/3)$ is where all intervals have the same size, $\ell_{A_i}=\ell_{B_j}$ for all $i,j$. Hence the allowed region for $(p,\rbb)$ is divided into ``upper'' and ``lower'' regions by this turning point.\footnote{One may calculate this turning point for the general $2m$-interval case in a similarly symmetric setting, where $\ell_{A_1}=\cdots=\ell_{A_m}$, $\ell_{B_1}=\cdots=\ell_{B_{m-1}}\neq\ell_{B_m}$. One obtains $\rbm^\text{turn}=\frac{m-1}{m}$. This is the point where all $A$ and $B$ intervals have the same length $\ell/2m$.} See \autoref{fig:analytic}. Geometrically, the upper region has $\ell_{B_3}<\ell_{B_{1,2}}$, while the lower region has $\ell_{B_3}>\ell_{B_{1,2}}$.
    \item For all temperatures, we find that the minimum allowed $p$ has decreased compared to the four-interval case.
    \item In particular, while the minimum $p$ in the lower region approaches $1/2$ with increasing temperature (as in the four-interval case), the upper region behaves oppositely: it allows a very narrow region where $p\rightarrow 0$ as $T\rightarrow\infty$.
\end{itemize}

\begin{figure}[!htb]
\centering
\begin{subfigure}{.55\textwidth}
  \centering
  \includegraphics[width=\linewidth]{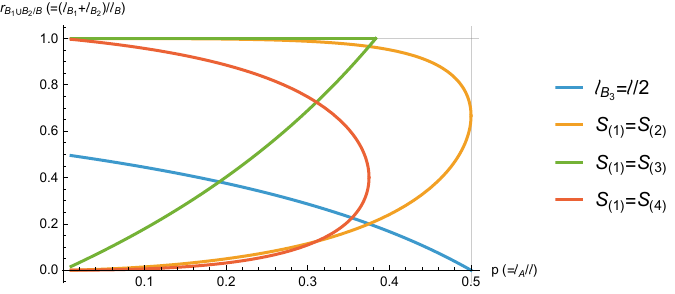}
  \caption{Low temperature, $\ell/\beta=10^{-10}$.}
  \label{fig:six-low}
\end{subfigure}%
\begin{subfigure}{.45\textwidth}
  \centering
  \includegraphics[width=0.95\linewidth]{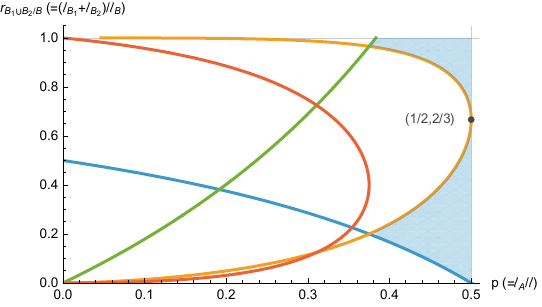}
  \caption{Analytic plot for zero-temperature limit.}
  \label{fig:analytic}
\end{subfigure}
\caption{Low-temperature case and analytic bounds in the zero-temperature limit. As in the four-interval case, the $T\rightarrow 0$ limit yields the limiting bounds. Analytic expressions can be found in \eqref{eq:low-analytic}. The region of interest, where \autoref{fig:six}(1) is the dominant surface and the smaller system $A$ has exponentially large complexity, is to the right of all the curves, as indicated in (b). Notice that the minimum allowed $p$ is smaller compared to the four-interval case (\autoref{fig:q=1/2}). }
\label{fig:six-low-analytic}
\end{figure}

\begin{figure}[!htb]
    \centering
    \begin{subfigure}{.55\textwidth}
        \centering
        \includegraphics[width=\linewidth]{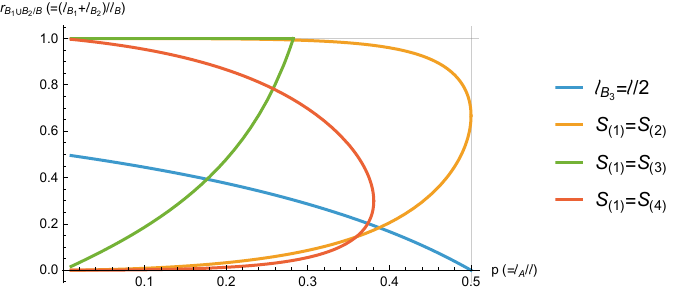}
        \caption{Medium temperature, $\ell/\beta=1$.}
        \label{fig:six-med}
    \end{subfigure}%
    \begin{subfigure}{.45\textwidth}
        \centering
        \includegraphics[width=0.95\linewidth]{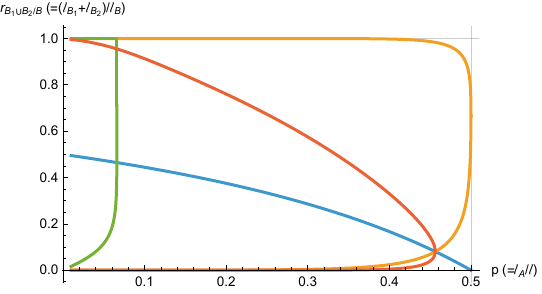}
    \caption{High temperature, $\ell/\beta=5$.}
    \label{fig:six-high}
    \end{subfigure}
    \caption{As temperature increases, smaller $p$ is allowed in the ``upper'' region ($\rbb>2/3$) because the $S_{(1)}<S_{(3)}$ bound (green), which yields the lower bound for $p$ in the upper region, becomes weaker. Meanwhile, the ``lower'' region ($\rbb<2/3$) shrinks with increasing temperature due to the $\ell_{B_3}<\ell/2$ bound (blue) and the $S_{(1)}<S_{(2)}$ bound (yellow).}
\end{figure}

\subsubsection{Analytics at Limiting Temperatures}\label{subsec:analyticssix}

As in the four-interval case, we can obtain analytic results at limiting temperatures.
Consider first the low-temperature limit ($\ell/\beta\ll 1$). In this limit, $\sinh\frac{\pi \ell}{\beta}\rightarrow\frac{\pi\ell}{\beta}$ and the temperature dependence drops out from the entropies; for example, the entropy of the first configuration $S_{(1)}$ becomes 
\begin{align}
    S_{(1)}&=\frac{c}{3}\log\left(\frac{\beta}{\pi \epsilon}\sinh{\frac{\pi\ell_{B_1}}{\beta}}\right)+
    \frac{c}{3}\log\left(\frac{\beta}{\pi \epsilon}\sinh{\frac{\pi\ell_{B_2}}{\beta}}\right)+
    \frac{c}{3}\log\left(\frac{\beta}{\pi \epsilon}\sinh{\frac{\pi\ell_{B_3}}{\beta}}\right)\notag\\
    &\approx\frac{c}{3}\log\left(\frac{\ell_{B_1}\ell_{B_2}\ell_{B_3}}{\epsilon^3}\right),
\end{align}
and comparing entropies amounts to comparing products of ratios. We can obtain the analytic expressions for the bounds in \autoref{fig:analytic} as functions of $r$ by solving $S_{(1)}=S_{(i)}$ for each $i=2,3,4$. These are the following (for notational simplicity, here we denote $r:=\rbb$):
\begin{align}\label{eq:low-analytic}
\ell_{B_3}=\ell/2:&\quad p=\frac{1-2r}{2(1-r)}, \notag \\
S_{(1)}=S_{(2)}:&\quad p
=
\frac{
3(r^2(1-r)/4)^{1/3}
}{
1+3(r^2(1-r)/4)^{1/3}
}, \notag \\
S_{(1)}=S_{(3)}:&\quad p=\frac{3(2r-2\sqrt{2}r+3r^2)}{-4 + 12 r + 9 r^2}, \notag \\
S_{(1)}=S_{(4)}:&\quad p=\frac{3(-7r+6r^2+\sqrt{16r-15r^2})}{2(3r-2)^2}.
\end{align}
Using these expressions, we can calculate where the bounds intersect in the $(p,\rbb)$ plane. In particular, we are interested in finding the minimum $p$. By direct calculation, we see that the $\ell_{B_3}=\ell/2$ bound and the $S_{(1)}=S_{(2)}$ bound intersect at $(p,\rbb)=(3/8,1/5)$, which gives the minimum value $p=0.375$ in the lower region, while the $S_{(1)}=S_{(2)}$ bound and the $S_{(1)}=S_{(3)}$ bound intersect at $(p,\rbb)=(3/8, 2(1+\sqrt{2})/5)$, which gives the same minimum value $p=0.375$ in the upper region.

The high-temperature limit is subtle because it reveals an interesting behavior at large but finite temperature, which disappears in the exact infinite-temperature limit. As before, set \(\alpha=\pi\ell/\beta\). If we keep \(p\) and \(\rbb\) fixed while taking \(\alpha\to\infty\), then
\begin{equation}
 \log\sinh(\alpha x)=\alpha x-\log 2+O(e^{-2\alpha x})
 \label{eq:highTlimit}
\end{equation}
and the entropy comparison reduces to comparing total lengths; therefore, the exotic phase disappears at sufficiently high temperature in this limit, just as in the four-interval case.

Nevertheless, the numerical plots show a narrow upper branch whose endpoint moves toward \(p=0\) as the temperature increases.  This is not a contradiction: the endpoint lies in a scaling regime in which \(p\) goes to zero while \(\rbb\) approaches one.  To capture it, take
\begin{equation}
 p=\frac{\xi}{\alpha},
 \qquad
 h=(1-p)(1-\rbb)\ll 1,
 \qquad
 \alpha\to\infty,
 \label{eq:highT_scaling}
\end{equation}
in such a way that \(\xi\) is held fixed and $h\alpha\ll 1$. The equation \(S_{(1)}=S_{(2)}\) fixes the exponentially small interval \(B_3\):
\begin{equation}
 h=\frac{4e^{-\alpha+\xi}}{\alpha}\sinh^3\!\left(\frac{\xi}{3}\right)
 \left[1+o(1)\right],
 \label{eq:h_scaling}
\end{equation}
with $o(1)$ error as a function of $\alpha$. The remaining nontrivial boundary is \(S_{(1)}=S_{(3)}\).  In the same scaling limit, this becomes
\begin{equation}
 \sinh\left(\frac{\xi}{3}\right)=\frac{1}{2} e^{-\xi/3} \; \implies \; e^{2\xi/3}=2 \; \implies \; \xi_*=\frac{3}{2}\log 2.
\end{equation}
Therefore, the lower endpoint of the upper branch scales as
\begin{equation}
 p_{\min}(T\to\infty)
 \sim
 \frac{\xi_*}{\alpha}
 =
 \frac{3\log2}{2}\frac{\beta}{\pi\ell}.
 \label{eq:highT_pmin}
\end{equation}
At the endpoint, \eqref{eq:h_scaling} reduces to
\begin{equation}
 h=(1-p)(1-\rbb)\sim \frac{e^{-\alpha}}{2\alpha},
\end{equation}
so the allowed region is squeezed exponentially close to \(r=1\).

Thus the high-temperature behavior of the six-interval case has two complementary descriptions.  For any fixed shape of the intervals, the exotic finite-temperature effect disappears as \(T\to\infty\).  However, if the shape is allowed to vary with temperature, then there remains a very narrow upper branch with
\begin{equation}\label{eq:squeezed}
 p_{\min}\propto \frac{\beta}{\ell},
 \qquad
 1-\rbb\propto e^{-\pi\ell/\beta},
\end{equation}
so that parametrically small subsystems can still contain the bridge at large but finite temperature.

Can we see this effect in the discrete qubit description on the boundary? The answer is no. In the discrete description, the boundary length $\ell$ corresponds to the number of qubits $n$ (multiplied by some dimensionful constant). Then the smallest fraction of an interval is $1/n$, which is proportional to $1/\ell$. Exponentially small fractions in $\ell\propto n$, like $e^{-\pi\ell/\beta}$ in \eqref{eq:squeezed}, are not realizable in qubits.

\subsection{Increasing \texorpdfstring{$m$}{m}} \label{subsec:hol_msubi}

We would like to know whether it is possible to decrease the minimum fraction $p_\text{min}$ occupied by subsystem $A$ (while allowing it to carry the exponentially large complexity) all the way to 0 by increasing the number of intervals $2m$. 

First, we consider a similarly symmetric configuration as before, $\ell_{A_1}=\cdots=\ell_{A_m}$ and $\ell_{B_1}=\cdots=\ell_{B_{m-1}}\neq\ell_{B_m}$. We focus on the low-temperature limit, where there are well-defined limiting bounds that give a $p_\text{min}$ that is also realizable in the boundary qubit description (i.e., no exponentially small fractions). One way to estimate $p_\text{min}$ is to see how the intersection points discussed in the $m=3$ case (namely, the intersection of the geometric lower bound with $S_{(1)}=S_{(2)}$ as well as the intersection of $S_{(1)}=S_{(2)}$ with $S_{(1)}=S_{(3)}$; see \autoref{fig:six-low-analytic}) behave in the large-$m$ limit. They might not provide the exact $p_\text{min}$ in this configuration, but they can serve as lower bounds for $p_\text{min}$.

\begin{figure}
    \centering
    \begin{tikzpicture}[scale=1.5, font=\small, thick]

\def\R{1.5}
\coordinate (center) at (0,0);

\def\angA{-55.8}
\def\angB{-23.4}
\def\angC{9}
\def\angD{41.4}
\def\angE{73.8}
\def\angF{106.2}
\def\angG{138.6}
\def\angH{171}
\def\angI{203.4}
\def\angJ{-124.2}

\draw (center) circle (\R);

\foreach \ang in {\angA, \angB, \angC, \angD, \angE, \angF, \angG, \angH, \angI, \angJ} {
    \draw (\ang:\R*0.97) -- (\ang:\R*1.03);
}

\node at (-90:\R+0.2) {$B_m$};
\node at (-39.6:\R+0.2) {$A_1$};
\node at (-7.2:\R+0.2) {$B_1$};
\node at (25.2:\R+0.2) {$A_2$};
\node at (57.6:\R+0.2) {$B_2$};
\node at (90:\R+0.2) {$A_3$};
\node at (122.4:\R+0.2) {$B_3$};
\node at (154.8:\R+0.2) {$\cdots$};
\node at (187.2:\R+0.4) {$B_{m-1}$};
\node at (219.6:\R+0.2) {$A_m$};

\begin{scope}[black, thick]
    \draw (\angJ:\R) to[bend left=35] (\angA:\R);
    \draw (\angB:\R) to[bend left=40] (\angE:\R);
    \draw (\angC:\R) to[bend left=25] (\angD:\R);
    \draw (\angF:\R) to[bend left=25] (\angG:\R);
    \draw (\angH:\R) to[bend left=25] (\angI:\R);
\end{scope}

\end{tikzpicture}
    \caption{The ``(3)" configuration in the $2m$-interval case.}
    \label{fig:2m-(3)}
\end{figure}

Let us look at the intersection point of the lower bound and the $S_{(1)}=S_{(2)}$ bound. It is given by the equations
\begin{equation}
    \log\ell_{B_1}\cdots\ell_{B_m}=\log\ell_{A_1}\cdots\ell_{A_m},\qquad\ell_{B_m}=\ell/2,
\end{equation}
which translates to
\begin{equation}
    \left(\frac{(1-p)\rbm}{m-1}\right)^{m-1}(1-p)(1-\rbm)=\left(\frac{p}{m}\right)^m,\quad (1-p)(1-\rbm)=\frac{1}{2}.
\end{equation}
In the $m\rightarrow\infty$ limit, one can show analytically that the solution asymptotes to $(p,r)\rightarrow(1/4,1/3)$.
Next, let us look at the intersection point of the $S_{(1)}=S_{(2)}$ bound and the $S_{(1)}=S_{(3)}$ bound. In the $2m$-interval case, the $S_{(3)}$ configuration is shown in \autoref{fig:2m-(3)}. The intersection point is given by the equations
\begin{equation}
    \log\ell_{B_1}\cdots\ell_{B_m}=\log\ell_{A_1}\cdots\ell_{A_m},\quad \log\ell_{B_1}\cdots\ell_{B_m}=\log{(\ell_{B_1}+\ell_{A_2}+\ell_{B_2})\ell_{A_2}\ell_{B_3}\cdots\ell_{B_m}},
\end{equation}
which translates to
\begin{gather}
    \left(\frac{(1-p)\rbm}{m-1}\right)^{m-1}(1-p)(1-\rbm)=\left(\frac{p}{m}\right)^m,\\ \left(\frac{(1-p)\rbm}{m-1}\right)^2=\left(\frac{2(1-p)\rbm}{m-1}+\frac{p}{m}\right)\frac{p}{m}.
\end{gather}
The solution for $\rbm$ is
\begin{equation}
    \rbm=\frac{(m-1)(1+\sqrt{2})^m}{1+(m-1)(1+\sqrt{2})^m},
\end{equation}
and in the $m\rightarrow\infty$ limit, the solution asymptotes to $(p,\rbm)\rightarrow(1 - \sqrt{2}/2,1)$.

These limiting intersection points show that with this particular symmetric choice of intervals, we cannot achieve arbitrarily small $p$ in the low-temperature limit regardless of the number of subintervals $m$. Instead, the minimum $p$ is lower-bounded by a nonzero value.

One may nevertheless expect that a large collection of intervals with almost vanishing total length, judiciously chosen, could yield an exponentially large complexity. A suggestion for how to accomplish this has been made in \cite{Pastawski:2016qrs} in the context of ``uberholography'': there, the authors argued that in the setting of empty AdS$_3$, one can choose a boundary subregion with fractal structure and negligible total length whose entanglement wedge nevertheless comprises almost the entire bulk.

For our purposes, the natural adaptation of their construction is to replace the inner logical boundary by the BTZ horizon. On a constant-time slice, the exterior region of a non-rotating BTZ black hole is an annulus with locally hyperbolic geometry,
\begin{equation}
 ds^2 = L^2\left(d\rho^2 + r_+^2\cosh^2\!\rho\, d\phi^2\right),
\qquad \rho\ge 0,
\end{equation}
where the horizon sits at $\rho=0$ and the asymptotic boundary at $\rho\to\infty$. Thus the geometric input behind their argument is essentially unchanged: one may start from a disconnected boundary region $A_0$ whose RT surface already places the horizon inside the entanglement wedge of $A_0$, and then recursively hollow out $A_0$ while preserving this property. The surviving set $A_{\tilde{m}}$ after $\tilde{m}$ iterations has total size
\begin{equation}
 |A_{\tilde{m}}| \sim \lambda^{\tilde{m}} |A_0|, \qquad 0<\lambda<1,
\end{equation}
so in the large-$\tilde{m}$ limit, one can make the total boundary fraction arbitrarily small while keeping the horizon inside the entanglement wedge. The resulting geometry is qualitatively sketched in \autoref{fig:fractalBTZ}.

Consequently, the exotic phase in which the entanglement wedge of the ``smaller'' subsystem crosses the bridge should not be thought of as a peculiarity of a few intervals; it should persist, and in fact we should expect it to become parametrically more dramatic, when one allows sufficiently intricate noncontiguous regions.

\begin{figure}[t]
\centering
\begin{tikzpicture}[scale=0.5]
  \definecolor{wedgeblue}{RGB}{156,225,226}

  \def\R{4.05}

  \begin{scope}
    \clip (0,0) circle (\R);
    \fill[wedgeblue] (0,0) circle (\R);

    \fill[white] (-5.15,0.15) circle (2.55);
    \fill[white] ( 5.10,0.10) circle (2.50);

    \fill[white] (-2.35,3.65) circle (0.63);
    \fill[white] (-1.30,3.93) circle (0.38);
    \fill[white] (0.10,4.28) circle (0.74);
    \fill[white] (1.28,3.95) circle (0.34);
    \fill[white] (2.18,3.74) circle (0.52);
    \fill[white] (2.85,3.78) circle (0.16);

    \fill[white] (-2.18,-3.74) circle (0.52);
    \fill[white] (-1.28,-3.95) circle (0.34);
    \fill[white] (-0.08,-4.28) circle (0.74);
    \fill[white] (1.35,-3.95) circle (0.38);
    \fill[white] (2.38,-3.68) circle (0.62);
    \fill[white] (-2.85,-3.78) circle (0.16);

    \draw[line width=1.1pt] (-5.15,0.15) circle (2.55);
    \draw[line width=1.1pt] ( 5.10,0.10) circle (2.50);

    \draw[line width=1.1pt] (-2.35,3.65) circle (0.63);
    \draw[line width=1.1pt] (-1.30,3.93) circle (0.38);
    \draw[line width=1.1pt] (0.10,4.28) circle (0.74);
    \draw[line width=1.1pt] (1.28,3.95) circle (0.34);
    \draw[line width=1.1pt] (2.18,3.74) circle (0.52);
    \draw[line width=1.1pt] (2.85,3.78) circle (0.16);

    \draw[line width=1.1pt] (-2.18,-3.74) circle (0.52);
    \draw[line width=1.1pt] (-1.28,-3.95) circle (0.34);
    \draw[line width=1.1pt] (-0.08,-4.28) circle (0.74);
    \draw[line width=1.1pt] (1.35,-3.95) circle (0.38);
    \draw[line width=1.1pt] (2.38,-3.68) circle (0.62);
    \draw[line width=1.1pt] (-2.85,-3.78) circle (0.16);
  \end{scope}
  \draw[line width=1.3pt] (0,0) circle (\R);
  \node[text=white] at (0,0) {\small BTZ};
  \fill[black] (0,0) circle (1.15);
  \node[text=white] at (0,0) {\small BTZ};
\end{tikzpicture}
\caption{A schematic adaptation of Figure 5 of \cite{Pastawski:2016qrs} to our finite-temperature setup. The blue region is the entanglement wedge of a highly disconnected boundary region. By recursively removing smaller boundary intervals, one can obtain a fractal-like boundary set of arbitrarily small total measure whose entanglement wedge still contains the BTZ horizon (shown in black at the center).}
\label{fig:fractalBTZ}
\end{figure}

In $d = 2$, the key to this effect is the fact that entanglement entropy is a nonlinear function of boundary length at finite temperature.  An even richer phase diagram may exist in higher dimensions ($d > 2$), where one can arrange for a small boundary subsystem to reconstruct nearly the entire bulk: $B$ can be chosen so that its static minimal surface is a collection of caps and $A$ can be chosen so that it occupies a negligible fraction of the boundary.

\subsection{Thermalization Time} \label{subsec:hol_time}

We now look at the thermalization time for noncontiguous subsystems within holography. In \cite{fan2025sharptransitionssubsystemcomplexity}, we exploited the fact that the extremal surface of a subsystem undergoes a transition \cite{Hartman:2013qma} from a time-dependent Hartman-Maldacena (HM) surface to a static Ryu-Takayanagi (RT) surface at boundary time $t_b\sim\text{(subsystem size)}$. For subsystems of less than half the system size, this transition signaled an end to the linear growth of the complexity and a drop to the low complexity associated to the maximally mixed state.\footnote{In $d > 2$, for a single region with $p < 1/2$, holography would predict that the complexity saturation time scales as $t_b\sim p^{1/(d - 1)}\ell$ at infinite temperature, where $\ell$ is the linear size of the entire boundary.} This transition time is known as the thermalization time. How does this behavior change for noncontiguous subsystems?

If we only consider equal-size intervals where $\ell_{A_i}=\frac{\ell p}{m}$ and $\ell_{B_i}=\frac{\ell (1-p)}{m}$ for all $i=1,\ldots,m$ at infinite temperature $\ell/\beta\rightarrow\infty$, the answer is straightforward. At infinite temperature, the entropy scales linearly in the sum of the lengths of intervals (see, e.g., \eqref{eq:highTlimit}). Then there are only two possible late-time RT surfaces, those of type (1) and (2) in \autoref{fig:six}. Suppose our subsystem of interest is $A$ and $p<1/2$. Then the RT surface of type (2) is the dominant surface. Since each cap is independent, we can apply the argument of \cite{fan2025sharptransitionssubsystemcomplexity} to each of the intervals. We conclude that the transition happens at around
\begin{equation}\label{eq:transition-time-m}
    t_b\sim\frac{\ell_A}{m}.
\end{equation}

Now consider generalizing this to finite temperatures, while keeping the equal-size conditions. Notice that the RT surface of type (2) continues to be the smallest-area configuration at finite temperature. To see this, consider comparing the configurations of types (2) and (4), (5) in \autoref{fig:six}. The entropies corresponding to each of the configurations contain the following factors:
\begin{align}
    S_{(2)}&=\frac{c}{3}\log\left(\sinh\frac{\pi\ell_{A_1}}{\beta}\sinh\frac{\pi\ell_{A_2}}{\beta}\times(\text{common factor)}\right),\\
    S_{(4), (5)}&=\frac{c}{3}\log\left(\sinh\frac{\pi\ell_{B_1}}{\beta}\sinh\frac{\pi(\ell_{A_1}+\ell_{B_1}+\ell_{A_2})}{\beta}\times(\text{common factor)}\right),
\end{align}
where $\text{(common factor)}=\prod_{i=3}^m \sinh\frac{\pi\ell_{A_i}}{\beta}$. Assuming $p<1/2$ and the equal-size conditions, it is obvious that $\ell_{A_1}<\ell_{B_1}$ and $\ell_{A_2}<\ell_{A_1}+\ell_{B_1}+\ell_{A_2}$, so $S_{(2)}$ is always smaller. One can replace $S_{(4), (5)}$ with any type of configuration, apply the same argument, and conclude that $S_{(2)}$ is indeed the configuration of minimum area.
In \cite{fan2025sharptransitionssubsystemcomplexity}, it was shown that at finite temperature, there exists a nonzero $p_\text{crit}(\beta)$ such that a contiguous subsystem with $p<p_\text{crit}$ does not undergo the HM to RT transition. This is because, for subsystems with $p<p_\text{crit}$, RT surfaces are always smaller than HM surfaces. For subsystems with $p_\text{crit}<p<1/2$, the transition happens after a short time that grows sublinearly with $\ell_A$. Dividing a subsystem into $m$ equal-size intervals has the effect of increasing the critical fraction by a factor of $m$:
\begin{equation}
    p^{(m)}_\text{crit}=mp_\text{crit},
\end{equation}
where $p^{(m)}_\text{crit}$ denotes the critical fraction of a subsystem that consists of $m$ equal-size intervals. Therefore, if we divide the subsystem into sufficiently many intervals such that all intervals have a fraction smaller than $p_\text{crit}$, then the subsystem will not undergo the transition. For subsystems with $p^{(m)}_\text{crit}<p<1/2$, the transition time becomes the same sublinear function of $\ell_A/m$, simply because (2) continues to be the minimum-area configuration.

Next, consider generalizing to subsystems with different interval sizes while keeping the temperature infinite. Again, only configurations of type (1) or (2) dominate because the entropy depends on the sum of the lengths of the intervals (hence (2) dominates in the case of $p<1/2$). In this case, there will be multiple transitions in time, starting from the shortest interval, until the system fully thermalizes.

Finally, consider subsystems with different interval sizes at finite temperature. The minimal-area configuration is no longer restricted to either (1) or (2), but can be any one of the $C_m= \frac{1}{m+1}\binom{2m}{m}$ possible configurations (where $C_m$ is the $m^\text{th}$ Catalan number) depending on the temperature and relative sizes of the intervals. Once the minimal-area RT surfaces are fixed, the transition will proceed from the shortest interval (assuming it is big enough to undergo a transition) to the longest interval until the system fully thermalizes. One can imagine the following bizarre situation. Suppose $p<1/2$, and the minimal-area RT surfaces are such that $A$ includes the wormhole (e.g., \autoref{fig:six}(1)). Suppose all intervals associated with an RT surface are shorter than the critical length $\ell p_\text{crit}$. Then this configuration never undergoes a transition, and the larger system $B$ always has a fixed low complexity.

\section{Disjoint Subsystems of Random Quantum Circuits} \label{sec:qi}

Several infinite-temperature holographic predictions for quantum complexity have been rigorously proven in random quantum circuits. 
Ref.~\cite{brandao2021models} derived complexity growth for the circuit unitary and its pure-state output, rigorously formulating and establishing the Brown-Susskind conjecture \cite{Brown:2017jil} in quantum information theory. 
In \cite{fan2025sharptransitionssubsystemcomplexity}, we derived sharp transitions for holographic complexity in \emph{connected} subsystems and proved their (infinite-temperature) quantum information analogues in subsystems of random quantum circuits. In the previous sections, we established holographic predictions for complexity transitions and thermalization timescales for \emph{disjoint} subsystems; here, we rigorously prove the latter. In particular, we prove an $O(pn/m)$ thermalization timescale for a $pn$-size subsystem with a constant $p \in (0, 1/2)$ composed of $m$ equal-size, disjoint subintervals in the setting of an $n$-qudit 1D brickwork random quantum circuit with periodic boundary conditions. Furthermore, we observe that in the limit of large local dimension $q$, the subsystems thermalize suddenly: before the thermalization time, subsystems are maximally far in trace distance from the infinite-temperature thermal state. Combined with our complexity lower bound from \cite{fan2025sharptransitionssubsystemcomplexity}, which continues to hold for disjoint subsystems, our results from this section imply that the complexity of subsystems of random circuits grows linearly for $\Omega(pn/m)$ time to $O((pn)^2/m)$ before collapsing to the complexity of the maximally mixed state at $O(pn/m)$ time.

Before we present our results, we clarify our notation. 
Let $n$ (assumed to be even) denote the total number of $q$-dimensional qudits. 
Let $\nu_t$ denote the ensemble of unitaries in the unitary group $U(q^n)$ of depth-$t$ brickwork random quantum circuits with periodic boundary conditions, defined as follows.

\begin{definition}[Depth-$t$ random quantum circuits]
\label{def:bwrqc}
Depth-$t$ brickwork random quantum circuits with periodic boundary conditions on $n$ qudits are unitaries in $U(q^n)$ constructed by $t$-fold compositions of $U_{0,1}\otimes U_{2,3}\otimes \cdots \otimes U_{n-2, n-1}$ at odd and $U_{1,2}\otimes U_{3,4}\otimes \cdots \otimes U_{n-3,n-2} \otimes U_{n-1,0}$ at even iterations in the composition, where for $i, j \in \{0, 1, 2, \ldots, n-1\}$, each $U_{i, j}$ is a two-qudit unitary, drawn randomly with respect to the Haar measure $\mu$ on $U(q^2)$, that acts like the identity operator on all qudits not labeled by $i$ or $j$.
\end{definition}

We will write ``depth-$t$ random circuit'' to mean ``depth-$t$ brickwork random quantum circuit with periodic boundary conditions'' for brevity. 
The notation $U_t \sim \nu_t$ means that we draw the unitary $U_t \in U(q^n)$ from the ensemble $\nu_t$. 
$\Pr_{U_t \sim \nu_t} (E(U_t))$ denotes the probability of the event $E(U_t)$, a function on the random variable $U_t$ that is picked from $\nu_t$. 
We consider a bipartition of $n$ qudits into a subsystem $A$ and its complement $B$, each composed of $m$ disjoint parts. 
$n_A := pn$ and $n_B$ are the total sizes and $\dA$ and $\dB$ are the total Hilbert space dimensions of $A$ and $B$, respectively, where $p$ is the subsystem fraction. 
We consider the state $\rho_{A}(U_t) := \tr_{B}(U_t \ketbra{\vphi} U_t ^\dagger)$, where $\ket{\vphi} \in (\mathbb{C}^{q})^{\otimes n}$ is an arbitrary, fixed initial state and $U_t \sim \nu_t$. 
$\rho_{A}(U_t)$ is the state of subsystem $A$ of a depth-$t$ random circuit $U_t$. 
We use Schatten $p$-norms $\lVert X \rVert_p$ for $d \times d$ complex matrices $X$, defined as $\lVert X \rVert_p := (\sum_{i = 1} ^{d} s_i(X) ^p)^{1/p}$ where for $i \in \{1, 2, \ldots, d\}$, $s_i(X)$ denote the singular values of the matrix $X$. 
In particular, the trace distance $\mathcal{D}(\rho, \sigma)$ between two equal-dimensional quantum states $\rho$ and $\sigma$ (positive semidefinite and with $\tr(\rho) = \tr(\sigma) = 1$) is $\mathcal{D}(\rho, \sigma) = \frac{1}{2}\lVert \rho - \sigma \rVert_1$. In this section, for two functions $f, g$ of a single variable $x$, we write $f(x) = O(g(x))$ and $f(x) = \Omega(g(x))$ to mean that there exist constants $C, c > 0$ such that $f(x) \leq C g(x)$ and $f(x) \geq c g(x)$ for all sufficiently large $x$, respectively.
Our main result is the following:

\begin{itheorem}{Informal \autoref{thm:main}}{Thermalization timescale for disjoint subsystems}
Fix $\delta > 0$ and $p \in (0, 1/2)$. 
Consider subsystems $A$ of depth-$t$ random circuits $U_t \sim \nu_t$ of size $n_A = pn$ composed of $m$ equal-size, disjoint parts. 
The subsystem states $\rho_{A}(U_t) = \tr_{B}(U_t \ketbra{\vphi} U_t ^\dagger)$ are $\delta$-close in trace distance to the maximally mixed state, $\frac{\iden_A}{\dA}$, with high probability over $U_t \sim \nu_t$ after $t = O(n_A / 2m)$.
\end{itheorem}

\vspace{0.5\baselineskip}

We prove the above statement in two steps. First, we re-express the above probabilistic statement as the probability that $\mathcal{D}(\rho_{A}(U_t), \frac{\iden_A}{\dA}) \geq \delta$ is small, replace it with a weaker condition which is true more often using H\"older's inequality, and apply Markov's inequality:
\begin{align}
    \nonumber
    \Pr_{U_t \sim \nu_t} \Big(\mathcal{D}\Big(\rho_{A}(U_t), \frac{\iden_A}{\dA}\Big) \geq \delta \Big) &= \Pr_{U_t \sim \nu_t} \Big(\frac{1}{2}\Big\lVert \rho_{A}(U_t) - \frac{\iden_A}{\dA}\Big\rVert_1 \geq \delta \Big) && (\text{definition of trace distance})\\
    \nonumber
    &\leq \Pr_{U_t \sim \nu_t} \Big(\frac{\sqrt{\dA}}{2} \Big\lVert \rho_{A}(U_t) - \frac{\iden_A}{\dA}\Big\rVert_2 \geq \delta \Big) && (\text{H\"older's inequality}) \\
    \nonumber
    &= \Pr_{U_t \sim \nu_t} \Big(\dA \Big\lVert \rho_{A}(U_t) - \frac{\iden_A}{\dA}\Big\rVert_2 ^2 \geq 4\delta ^2 \Big) && (\text{square and rearrange}) \\
    \nonumber
    &= \Pr_{U_t \sim \nu_t} \Big( \dA \tr(\rho_{A} ^2(U_t)) - 1 \geq 4\delta ^2 \Big) \vphantom{\Big\lVert\Big\rVert_2^2} && (\text{evaluate Schatten $2$-norm}) \\
    &\leq \frac{\dA \Ex_{U_t \sim \nu_t} \tr(\rho_{A} ^2(U_t)) - 1}{4\delta^2} && (\text{Markov's inequality})\,.
\end{align}
Second, we evaluate the expected purity $\Ex_{U_t \sim \nu_t} \tr(\rho_{A} ^2(U_t))$ using a mapping to a random walk prob\-lem. The evolution of the purity of a subsystem has been well-studied in random quantum circuits \cite{ODP07,znidaric08,nahum2017quantum,cotler2022fluctuations,bensa22}. 
Under a statistical-mechanical mapping, the expected purity is a sum of contributions from $2m$ positively weighted, colliding random walkers. 
Our main technical contribution is \autoref{lem:main}, which is an upper bound on that sum that we achieve by further mapping the problem to one of non-intersecting chords on an annulus.

\subsection{Thermalization Timescale for Multiple Intervals in 1D} \label{subsec:qi_purity}

Consider a finite number $n$ of $q$-dimensional qudits that are partitioned into a subsystem $A$ and its complement. 
Consider a Haar-random gate $U \sim \mu(U(q^2))$ that acts nontrivially on a pair of qudits $i, j$ such that $i \in A$ and $j \not\in A$. 
We denote the states of $A$, $A - \{i\}$, and $A + \{j\}$ by $\rho_A$, $\rho_{A - \{i\}}$, and $\rho_{A + \{j\}}$ before the gate and by $\rho_A(U)$, $\rho_{A - \{i\}}(U)$, and $\rho_{A + \{j\}}(U)$ after the gate, respectively. The average purity of $\rho_{A}(U)$ over a Haar-random gate $U \sim \mu(U(q^2))$ is given by \cite{nahum2017quantum}
\begin{align}
    \Ex_{U \sim \mu(U(q^2))} \big[\tr(\rho_{A} ^2 (U))\big] = \frac{q}{q^2 + 1} \Big( \tr\big(\rho_{A-\{i\}}^2\big) + \tr\big(\rho_{A+\{j\}}^2\big)\Big)\,.
\end{align}
Beginning with a fully separable state $\ket{\vphi}$ on the $n$ qudits, the above relation can then be interpreted as a time-reversed, symmetric, Markovian random walk (i.e., the final circuit time corresponds to the initial random-walk time) \cite{cotler2022fluctuations}. 
For details, see the proof of Proposition 1 in \cite{cotler2022fluctuations} as well as Appendix A of \cite{Haah:2025hyf}. 
For that choice of $\ket{\vphi}$, and using the random walk point of view, the authors of \cite{cotler2022fluctuations} derived the average purity of a single contiguous $n_A$-size partition $A$ in an infinite 1D brickwork random quantum circuit with periodic boundary conditions:
\begin{equation}
    \Ex_{U_t \sim \nu_t} \big[\tr(\rho_A(U_t)^2) \big] = g_A(t) \left(\frac{q}{q^2+1} \right)^{2t} + \sum_{t'=1}^t c_A(t') \left(\frac{q}{q^2+1} \right)^{2t'}\,,
\end{equation}
where $c_A(t')$ is the number of intersections of two random walkers at a time step $t'$ and $g_A(t)$ is the number of configurations of two non-intersecting random walkers after $t$ steps, i.e., with no previous intersections. 
In particular,
\begin{equation}
c_A(t) = \frac{n_A}{2 t}\binom{2t}{t-n_A/2}\,.
\end{equation}
Using Lagrange inversion to evaluate the $t \rightarrow \infty$ limit of the sum $\sum_{t'=1}^t c_A(t') \left(\frac{q}{q^2+1} \right)^{2t'}$ and using the worst-case upper bound $g_A(t) \leq 2^{2t}$, we obtain
\begin{equation}
    \Ex_{U_t \sim \nu_t} \big[\tr(\rho_A(U_t)^2) \big] \leq \frac{1}{\dA} + \left(\frac{2q}{q^2+1} \right)^{2t}\,,
\end{equation}
where $\dA$ is the Hilbert space dimension of $A$. 
Thus we arrive at the following useful lemma for contiguous intervals, e.g., a single part of the subsystem $A$ or of its complement $B$.

\begin{lemma}
\label{lem:base}
For any contiguous interval $I$, the contribution to the average purity from that interval is bounded as
\begin{equation}
    \sum_{t'=1}^t c_I(t') \left(\frac{q}{q^2+1} \right)^{2t'} \leq \frac{1}{D_I}\,,
\end{equation}
where $c_I(t')$ is the number of intersections of two random walkers at a time step $t'$ and $d_I$ is the total Hilbert space dimension of the qudits in the contiguous interval $I$.
\end{lemma}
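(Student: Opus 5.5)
The plan is to drop the finite cutoff in $t$ and evaluate the resulting infinite series exactly with a Catalan-type generating function, which is the Lagrange-inversion computation mentioned just above the statement. Write $x = q/(q^2+1)$ and $k = n_I/2$, where $n_I$ is the number of qudits in $I$, so that $D_I = q^{n_I} = q^{2k}$. (For brickwork walkers the endpoints sit on gate boundaries; I will take $n_I$ even, and handle odd $n_I$ by the same argument after a shift of the walker endpoints by one site.) By the formula quoted above, $c_I(t') = \frac{k}{t'}\binom{2t'}{t'-k}$. These coefficients vanish for $t' < k$ and are nonnegative otherwise. Hence the partial sum is bounded by the full series:
\begin{equation*}
\sum_{t'=1}^t c_I(t')\, x^{2t'} \leq \sum_{t'\geq 1} \frac{k}{t'}\binom{2t'}{t'-k}\, z^{t'}, \qquad z := x^2 = \frac{q^2}{(q^2+1)^2}.
\end{equation*}

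The key step is the identification of this series. Let $C(z) = \sum_{n\geq 0} \frac{1}{n+1}\binom{2n}{n} z^n = \frac{1-\sqrt{1-4z}}{2z}$ be the Catalan generating function, which satisfies $C = 1 + zC^2$. By Lagrange inversion one has the standard formula $[z^n]\, C(z)^r = \frac{r}{2n+r}\binom{2n+r}{n}$.
\begin{itemize}
\item Taking $r = 2k$ and $n = t'-k$ gives $[z^{t'}]\,(zC(z)^2)^k = \frac{2k}{2t'}\binom{2t'}{t'-k} = c_I(t')$.
\item Therefore the series equals $(zC(z)^2)^k = (C(z)-1)^k$.
\end{itemize}
I will check this formula once, either by citing it or by a quick ballot-number argument.

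Next I evaluate at the specific $z$. Since $1-4z = \bigl(\tfrac{q^2-1}{q^2+1}\bigr)^2$, we have $z < 1/4$ for $q \geq 2$. The series therefore converges absolutely, and the relevant branch is $\sqrt{1-4z} = \frac{q^2-1}{q^2+1}$. This gives
\begin{equation*}
C(z) = \frac{(q^2+1)^2}{2q^2}\cdot\frac{2}{q^2+1} = \frac{q^2+1}{q^2}, \qquad C(z) - 1 = \frac{1}{q^2}.
\end{equation*}
The infinite sum is thus $q^{-2k} = q^{-n_I} = 1/D_I$. This proves the bound, and shows it becomes an equality as $t\to\infty$.

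The main obstacle is bookkeeping rather than analysis. First, one must confirm that the quoted $c_I(t')$ is exactly the coefficient of $(zC^2)^k$, with no offset between random-walk steps and circuit layers. Second, one must choose the correct square-root branch so that $C(z)$ is the convergent power series and not the other root $\frac{1+\sqrt{1-4z}}{2z}$. If the indexing turned out to be shifted by a layer, the same computation would still give a bound of the form $1/D_I$ up to a $q$-independent rescaling of $z$. In that case I would redo the Lagrange inversion with the shifted exponent.
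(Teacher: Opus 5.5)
Your proposal is correct and follows the paper's own route. The paper also bounds the finite sum by its $t\to\infty$ limit (all terms are nonnegative) and evaluates that limit by Lagrange inversion; your identification $\sum_{t'} c_I(t')z^{t'} = (C(z)-1)^{n_I/2}$, with $C(z)-1 = 1/q^2$ at $z = q^2/(q^2+1)^2$, makes that computation explicit, including the convergence check ($z<1/4$) and the choice of the small root, both of which the paper leaves implicit. The quoted formula for $c_I$ already presupposes even $n_I$, so the shift argument for odd $n_I$ is neither needed for the statement as given nor justified as you wrote it.
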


{\bf Two-interval case.} 
In the two-interval case, where the subsystem $A$ is contiguous (and its complement $B$ is not infinite), the purity decay is governed by two random walkers as before, who can now meet in two ways: by traversing either $n_A$ or $n_B$ qudits. 
In this case, the averaged purity decay is
\begin{equation}
    \Ex_{U_t \sim \nu_t} \big[\tr(\rho_A(U_t)^2) \big] \leq \frac{1}{\dA} + \frac{1}{\dB} + \left(\frac{2q}{q^2+1}\right)^{2(t-1)}\,,
\end{equation}
and most states $\rho_A(U_t)$ are close to $\iden_A/\dA$ at the timescale $t\sim n_A/2$.

{\bf Four-interval case.} 
Assume that the subsystem $A$ is comprised of two disjoint subintervals $A_1$ and $A_2$, separated by subintervals $B_1$ and $B_2$. 
In this case, the averaged purity decay is
\begin{align}
    \Ex_{U_t \sim \nu_t} \big[\tr(\rho_A(U_t)^2) \big] &\leq \frac{1}{D_{A_1}}\left(\frac{1}{D_{A_2}} + \frac{1}{D_{A_2 ^c}}\right) + \frac{1}{D_{B_1}}\left(\frac{1}{D_{B_2}} + \frac{1}{D_{B_2 ^c}}\right) + \frac{1}{D_{A_2}}\frac{1}{D_{A_1 ^c}} + \frac{1}{D_{B_2}}\frac{1}{D_{B_1 ^c}}\nn
    \label{eq:qi_four}
    &\qquad + \left(\frac{1}{D_{A_1}} + \frac{1}{D_{A_2}} + \frac{1}{D_{B_1}} + \frac{1}{D_{B_2}}\right) \left(\frac{2q}{q^2+1}\right)^{2(t-1)} + \left(\frac{2q}{q^2+1}\right)^{4(t-1)}\,,
\end{align}
where the superscript $c$ stands for ``complement.'' 
Then, for equal-sized $A_i$ subintervals, we find that the timescale at which most $\rho_A(U_t)$ are close to $\iden_A/\dA$ is $t\sim n_A/4$.

{\bf Six-interval case.} 
Assume that the subsystem $A$ is comprised of three disjoint intervals $A_1$, $A_2$, and $A_3$, separated by intervals $B_1$, $B_2$, and $B_3$. 
In this case, the averaged purity decay is
\begin{align}
    \nonumber
    &\hspace{-2mm} \Ex_{U_t \sim \nu_t}\big[\tr(\rho_A(U_t)^2) \big] \\
    \nonumber
    &\leq \frac{1}{D_{A_1}D_{A_2}D_{A_3}} + O\Big(\frac{1}{D_{B_i}}\Big) + \left(\frac{1}{D_{A_1}D_{A_2}} + \frac{1}{D_{A_2}D_{A_3}} + \frac{1}{D_{A_3}D_{A_1}} + O\Big(\frac{1}{D_{B_i}}\Big)\right) \left(\frac{2q}{q^2+1}\right)^{2(t-1)}\\
    \label{eq:qi_six}
    &+ \left(\frac{1}{D_{A_1}} + \frac{1}{D_{A_2}} + \frac{1}{D_{A_3}} + \frac{1}{D_{B_1}} + \frac{1}{D_{B_2}} + \frac{1}{D_{B_3}}\right) \left(\frac{2q}{q^2+1}\right)^{4(t-1)} + \left(\frac{2q}{q^2+1}\right)^{6(t-1)}\,.
\end{align}
Then, for equal-sized $A_i$ subintervals, we find that the timescale at which most $\rho_A(U_t)$ are close to $\iden_A/\dA$ is $t\sim n_A/6$.

{\bf Any number of intervals.} 
For any even number $2m$ of equal-sized, disjoint intervals (with $m$ subintervals in each of $A$ and $B$), the above examples suggest that we should find a thermalization timescale of $t\sim n_A/2m$. 
In the following, we rigorously prove this expectation. 
In particular, for $m = O(n_A)$, the thermalization timescale is a constant. In this scenario, the complexity grows and saturates to $O(n_A)$ in constant time, and there is no extensive complexity drop.

{\bf The generic upper bound.} 
The upper bound for the case of $2m$ disjoint intervals can be thought of as a series in even powers of $\alpha(t) := (2q/(q^2 + 1))^{t-1}$ given by
\begin{equation*}
\Ex_{U_t \sim \nu_t} \big[\tr(\rho_A(U_t)^2) \big] \leq \sum_{\substack{i = 0 \\ \text{$i$ even}}} ^{2m} \vartheta_i\alpha(t)^{i}\,,
\end{equation*}
where the coefficients $\vartheta_i$ are solely functions of the $A$ and $B$ subinterval Hilbert space dimensions. 
Our claim is that $\vartheta_i$ is directly related to the number of non-intersecting chords on an annulus with $2m$ labeled endpoints (interval endpoints) on the outer circle and $i$ unlabeled points (number of non-intersecting walks) on the inner circle. 
The weight of a chord is the Hilbert space dimension of its homologous boundary subsystem, and the weight of a cut (a line segment from the outer to the inner circle) is 1. 
For example, for the four-interval scenario, we have the following power series, which evaluates to the same expression as given above when we convert diagrams to weights via the homology constraint.
\begin{multline}
\label{eq:example_power_series}
    \Ex_{U_t \sim \nu_t}\big[\tr(\rho_A(U_t)^2)\big] \leq \\
    \alpha(t)^0 \times \Bigg[\chords{T}{E}{T}{T} + \chords{T}{E}{N}{T} + \chords{N}{E}{T}{T} + \chords{T}{D}{T}{C} + \chords{T}{D}{N}{C} + \chords{N}{D}{T}{C} \Bigg] \\
    + \alpha(t)^2 \times \Bigg[\chordslines{T}{E}{H}{D}{T} +  \chordslines{T}{D}{V}{T}{C} + \chordslines{T}{T}{H}{C}{E} + \chordslines{T}{C}{V}{E}{D} \Bigg] + \alpha(t)^4 \times \onlycuts \,.
\end{multline}
The diagrams only illustrate the different chord and cut topologies for labeled endpoints. 
Therefore, the diagrams are insensitive to the subinterval sizes. The dependence on the sizes of the subintervals comes from translating the diagrams to weights via the homology constraint.

\begin{lemma}[Diagrammatic upper bound]
\label{lem:main}
Consider an $n$-qudit brickwork random quantum circuit with periodic boundary conditions containing a subsystem $A$ with $m$ disjoint subintervals. 
Then the average purity of $A$ at time $t$, denoted by $\Ex_{U_t \sim \nu_t} \big[\tr(\rho_A(U_t)^2)\big]$, is upper-bounded as follows:
\begin{align}
    \Ex_{U_t \sim \nu_t} \big[\tr(\rho_A(U_t)^2)\big] \leq \sum_{\substack{i = 0 \\ \text{$i$ even} }} ^{2m} \vartheta_i \alpha(t)^i\,,
\end{align}
where $\alpha(t) := (2q / (q^2 + 1))^{t-1}$ and $\vartheta_i$ are weights, given by products of Hilbert space dimensions of intervals, that are determined from homology constraints on the non-intersecting chords of an annulus with $2m$ points on the outer circle and $i$ points on the inner circle.
\end{lemma}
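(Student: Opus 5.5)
We will follow the random-walk mapping already used for the contiguous case. Take $\ket{\vphi}$ to be a product state, since that is the setting in which the mapping of \cite{cotler2022fluctuations} applies. Iterating the single-gate identity for the Haar-averaged purity backwards in circuit time then writes $\Ex_{U_t\sim\nu_t}\tr(\rho_A(U_t)^2)$ as a sum over histories of $2m$ domain walls. These walkers start at the $2m$ endpoints of the intervals $A_1,B_1,\ldots,A_m,B_m$. At each step every walker moves left or right with weight $q/(q^2+1)$. When two neighbouring walkers meet they annihilate, and the domain between them is absorbed into its neighbours. A history terminates once the walls have pairwise annihilated, or it survives to the final time, where the boundary contribution is at most $1$ because the initial state is a product state.

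We then organise this sum by the pairing structure of the history. Each history defines a partial matching of the $2m$ outer endpoints, namely the pairs of walkers that annihilated, together with a set of $i$ walkers that survive to time $t$. Walkers in one spatial dimension cannot cross before annihilating, so the annihilated pairs form non-crossing chords on the outer circle of an annulus. The survivors are cuts to the inner circle, and $i$ is even because walls are created and destroyed in pairs. The chords and cuts together give one of the annular diagrams of \eqref{eq:example_power_series}. Summing over histories with a fixed diagram, we bound the weight of the diagram by a product over its components. Relaxing the constraint that walkers in different components avoid each other can only enlarge a sum of positive terms, so this relaxation is an upper bound.

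For a chord whose endpoints bound a region $R$, with $R$ fixed by homology as a union of intervals lying on one side of the chord, the contribution is at most the first-meeting sum of two walkers separated by $|R|$ sites. By \autoref{lem:base} this is at most $1/D_R$, and we use the homology constraint to choose the side for which this is the correct weight. For each cut, a lone walker that survives $t$ steps contributes at most $(2q/(q^2+1))^{t-1}$, after pulling out one step for normalisation as in the two-interval case. Each cut therefore gives a factor $\alpha(t)$. Multiplying these factors gives $\vartheta_i\alpha(t)^i$, where $\vartheta_i$ is the sum over $i$-cut diagrams of the products of the $1/D$ chord weights. Summing over all diagrams proves the lemma.

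The main obstacle is the factorisation step. A nested chord, such as the one enclosing $A_1\cup B_1\cup A_2$ in configuration (3), involves walkers whose first meeting is conditioned on the inner pair having already annihilated. We must also check that the region swept out equals the homology region, so that the weight $1/D_R$ is correct. We would first handle this by induction on $m$: condition on the earliest annihilation event, then apply \autoref{lem:base} to that pair. The remaining configuration is a $2(m-1)$-wall problem with merged intervals, which is exactly the chord-removal recursion on the annulus. We would verify the recursion against the four- and six-interval formulas \eqref{eq:qi_four} and \eqref{eq:qi_six}.
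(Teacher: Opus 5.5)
Your proposal is correct and follows essentially the same route as the paper. Both expand the purity into annihilating domain-wall histories, classify each history by its non-crossing chord/cut diagram on the annulus, and bound each chord by \autoref{lem:base} applied to its homologous region and each cut by $\alpha(t)$; your explicit relaxation of the inter-component non-collision constraints makes the factorization more transparent than the paper's sketch, and it already disposes of the nested-chord ``obstacle'' you raise. Once the inner walkers are ignored, the outer pair's separation along the enclosed arc is a free walk starting at $|R|$ whose first hit of zero is implied by the actual annihilation, so \autoref{lem:base} applies directly and the induction on $m$ is unnecessary (as sketched, it would also leave the merged intervals at their time-$\tau$ sizes rather than their initial ones).
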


\begin{proof}[Proof of \autoref{lem:main}]
Consider the diagram
\begin{align}
    \label{eq:proof_1}
    \onlycircle\,,
\end{align}
where the qudits reside on the circles and the time increases from 0 at the outer circle to $t$ at the inner circle. 
The black dots on the outer circle denote the endpoints of the $m$ disjoint intervals of $A$ (four, in the case of \eqref{eq:proof_1}). 
If two random walkers collide before time $t$, then we depict that situation on the diagram with a chord between the endpoints. 
If a random walker does not collide with another random walker, then we draw a cut from the endpoint on the outer circle to the corresponding gray dot on the inner circle. 
Since two random walkers cannot pass each other without colliding, we are restricted to diagrams with non-intersecting chords and cuts. 
The weight that the diagram contributes to the sum for the average purity is derived from the upper bound in \autoref{lem:base}, which is obtained by summing the contributions from all collisions between two random walkers situated a certain distance away from one another (from $t=0$ to $t \rightarrow \infty$). 
This upper bound happens to be the inverse Hilbert space dimension of the qudits between the random walkers. 
For periodic boundary conditions, there are two possibilities for the qudit intervals (represented by the solid blue and red lines on the outer circle) and two possibilities for chords between endpoints on the outer circle (represented by the dashed blue and red lines):
\begin{align}
    \onetoone\,.
\end{align}
They are in one-to-one correspondence via the homology constraint, i.e., a chord and a boundary interval correspond to each other when they are homologous.
\end{proof}

\begin{proposition}[Number of terms in $\vartheta_0$]
\label{prop:main_p1}
The number of terms in $\vartheta_0$ is $\binom{2m}{m}$.
\end{proposition}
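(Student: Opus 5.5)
The plan is to count the $i=0$ diagrams of \autoref{lem:main} directly. These are the configurations with no cuts to the inner circle, so every one of the $2m$ labeled endpoints on the outer circle is paired by a chord. The claim then reduces to a purely combinatorial statement: the number of non-intersecting perfect matchings of $2m$ labeled points on the outer boundary of an annulus, counted up to isotopy in the annulus, is $\binom{2m}{m}$. The sanity checks from the text support this. For $m=1$ there are two chords, one going each way around the hole, matching the two terms $1/\dA+1/\dB$. For $m=2$ there are the six $\alpha(t)^0$ diagrams in \eqref{eq:example_power_series}, and $\binom{4}{2}=6$.

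The key step is a bijection between annular diagrams and pairs (disk matching, face). Take an annular diagram with no cuts and fill in the inner hole. The result is a non-crossing perfect matching of $2m$ points on a disk, and there are $C_m=\frac{1}{m+1}\binom{2m}{m}$ of these. The $m$ non-intersecting chords cut the disk into exactly $m+1$ faces, by Euler's formula or simply by induction on adding one chord at a time. The position of the hole is recorded by which face contained it.

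Conversely, given a disk matching and a choice of one of its $m+1$ faces, puncture that face to obtain an annular diagram. These two maps are mutually inverse up to isotopy. The reason is that an isotopy class of a chord in the annulus with fixed endpoints is determined by which side of the hole it passes. Equivalently, through the homology constraint in the proof of \autoref{lem:main}, each chord is determined by which of its two complementary boundary arcs it is homologous to, namely the arc not enclosing the hole. So a diagram is determined by the underlying pairing together with, for each chord, the side facing the hole, and this data is exactly the face containing the hole. The total count is therefore
\begin{equation}
C_m\,(m+1)=\binom{2m}{m}.
\end{equation}

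The main obstacle is rigor in the correspondence, not the arithmetic. I must check two things. First, distinct choices of face give distinct terms in $\vartheta_0$. This holds because the homologous intervals of the chords, and hence the Hilbert space dimension weights, differ, at least as labeled diagrams. Second, every non-intersecting annular chord configuration arises this way. In particular, a chord cannot wind around the hole more than once without self-intersecting or crossing, because walkers cannot pass one another. If the face argument feels too topological, I would fall back on an encoding check: write each endpoint as an up or down step according to whether its chord's hole-avoiding arc proceeds clockwise or counterclockwise from it. This should yield a bijection with sequences of $m$ ups and $m$ downs, again giving $\binom{2m}{m}$, via the cyclic lemma.
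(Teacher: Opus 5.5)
Your proposal is correct and follows the paper's proof: count the $C_m=\frac{1}{m+1}\binom{2m}{m}$ non-crossing matchings on the filled-in disk and place the puncture in any one of the $m+1$ faces, which gives $(m+1)C_m=\binom{2m}{m}$. Your additional checks fill in details the paper leaves implicit: a simple chord in the annulus is fixed by which side of the hole it passes, distinct faces are separated by some chord, and the up/down encoding gives an independent bijective count.
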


\begin{proof}
The number of non-intersecting chord configurations on a \emph{circle} with $2m$ marked points is $\frac{1}{m+1}\binom{2m}{m}$. 
In each configuration, the chords divide the interior of the circle into $m+1$ regions. 
Any such diagram can be upgraded to an annulus by placing a puncture, which represents the inner boundary, within any one of these regions. Alternatively, any non-intersecting $m$-chord diagram on the annulus with the homology constraint can be identified with a non-intersecting $m$-chord diagram on the circle with a puncture in one of its $m+1$ regions.
\end{proof}

Having proven the general form of the average purity of subsystems with $m$ disjoint subintervals in \autoref{lem:main} (and \autoref{prop:main_p1}), we specialize that result to subsystems of $m$ equal-size, disjoint subintervals.

\begin{lemma}[Asymptotic average purity for disjoint subintervals of equal size]
\label{lem:equal}
Let $A = \bigcup_i a_i$ and $B = \bigcup_i b_i$, where $\{a_i\}_{i=1} ^{m}$ is a set of $m$ mutually disjoint intervals, each of which is contiguous and of size $n_a = n_A/m$, and $\{b_i\}_{i=1}^{m}$ is a set of $m$ mutually disjoint intervals, each of which is contiguous and of size $n_b = n_B/m$. 
Let the Hilbert space dimension of any $a_i$ and $b_i$ be given by $\da$ and $\db$, respectively, so that $\dA = \da ^{m}$ and $\dB = \db ^{m}$. 
Then the average purity is
\begin{align}
    \Ex_{U_t \sim \nu_t} \big[\tr(\rho_A(U_t)^2)\big] &\leq \sum_{\substack{i = 0 \\ \text{$i$ even}}} ^{2m} \left[\theta_i \alpha(t)^i + O\left(\frac{(\sqrt{\da}\alpha(t))^i}{\dA} \frac{\da}{\db}\right)\right],
\end{align}
where $\theta_i$ correspond to diagrams with chords homologous to as many $a_i$ as possible.
\end{lemma}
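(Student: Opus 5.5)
I will start from the diagrammatic bound of \autoref{lem:main} and specialize the weights $\vartheta_i$ to the equal-size geometry. Each coefficient is a sum over non-intersecting chord/cut diagrams on the annulus with $2m$ labeled outer points and $i$ inner points. Each chord contributes $1/D_I$, where $I$ is the boundary interval homologous to it. With all $a_i$ of size $n_a$ and all $b_i$ of size $n_b$, every chord joins two endpoints, and its homologous interval is a contiguous union of alternating $a$- and $b$-intervals. Its weight is therefore $\da^{-k}\db^{-k'}$, where $k$ and $k'$ count the $a$- and $b$-intervals it encloses. Because $p<1/2$ gives $n_a<n_b$, the weight is largest when $k'$ is as small as possible.

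For fixed $i$, I will split the diagrams into two classes. The first class consists of those in which every chord is ``short'' and homologous to a single $a_j$, joining the two endpoints of $a_j$ on the side not containing any $b$. These define $\theta_i$. Exactly $i/2$ of the $a_j$ are left uncapped, so their $i$ endpoints carry cuts, and
\begin{equation*}
\theta_i=\binom{m}{i/2}\da^{-(m-i/2)} .
\end{equation*}
Counting the endpoints shows that each diagram has $(2m-i)/2$ chords. Any diagram outside $\theta_i$ must therefore contain at least one chord whose homologous interval includes some $b_j$, since chords covering only single $a$'s are exhausted by the $\theta_i$ diagrams. I will bound such a diagram's weight by that of the ``nearest'' $\theta_i$-type diagram multiplied by $\da/\db$. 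The idea is to replace the offending chord and its nested partners by single-$a$ caps, exchanging at least one factor of $1/\db$ for a factor of $1/\da$. Writing $\dA=\da^m$, every weight is then at most $\da^{i/2}\dA^{-1}(\da/\db)$, which matches the claimed error term $(\sqrt{\da}\,\alpha)^i\dA^{-1}\da/\db$.

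The main obstacle is the combinatorial prefactor. The number of non-$\theta_i$ diagrams grows like Catalan-type numbers, as \autoref{prop:main_p1} shows with $\binom{2m}{m}$ for $i=0$. This growth must be absorbed into the $O(\cdot)$ constant, or it must be controlled because diagrams with many enclosed $b$'s are suppressed by higher powers of $1/\db$. My first attempt will organize the non-$\theta_i$ diagrams by the total number $r\ge1$ of $b$-intervals enclosed by chords. I will bound the number of diagrams at level $r$ by something like $\binom{2m}{m}\cdot$(choices) and show the resulting sum is a geometric series in $\da/\db$ dominated by $r=1$. For fixed $m$, and $\db/\da=q^{n_b-n_a}$ large, the sum is $O(\da/\db)$. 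If uniformity in $m$ is needed, I will use $n_b-n_a=(1-2p)n/m$ to compare the combinatorial growth with $q^{(1-2p)n/m}$.

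Summing over even $i$ then gives the stated inequality. I will also double-check the homology assignment for chords on the annulus: a chord can be homologous to either complementary arc depending on the side of the puncture. In the $\theta_i$ diagrams the puncture must lie in the large central region, and I will verify that this choice is consistent with the cuts.
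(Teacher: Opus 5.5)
Your proposal is correct and matches the paper's proof. The paper also takes $\theta_i$ to be the $\binom{m}{i/2}$ diagrams whose $m-i/2$ chords each cap a single $a_j$. It bounds every other diagram, which has $j\ge 1$ chords homologous to arcs containing some $b$, by the direct per-chord estimate $\da^{-(m-i/2-j)}\db^{-j}\le \da^{-(m-i/2)}\,\da/\db$, and absorbs the diagram count (a function of $m$ alone, independent of $q$) into the $O(\cdot)$ constant. Your chord-replacement heuristic and the level-by-level resummation for uniformity in $m$ are therefore unnecessary.
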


\begin{proof}
Fix an even $i$ such that $0 \leq i \leq 2m$. 
There are a total of $m - i/2$ chords, some of which are strictly homologous to $A$ subintervals. 
Any diagram wherein all chords are not strictly homologous to $A$ subintervals will have at least one chord homologous to an interval containing a $B$ subinterval. 
$\theta_i$ counts those diagrams that have $m-i/2$ chords strictly homologous to $A$ subintervals and $i$ cuts. 
There are $m$ $A$ subintervals, of which any $i/2$ intervals can have endpoints on cuts and not chords, so there are $\binom{m}{i/2}$ possibilities. 

The contribution from terms that are not counted in $\theta_i$ is bounded as follows. 
Consider the diagrams with $m-i/2$ chords, of which $j \geq 1$ chords are homologous to intervals containing $B$ subintervals. These chords contribute a weight of at most $\frac{1}{\da^{m - i/2 - j}} \frac{1}{\db^j} = \frac{1}{\da^{m - i/2}} \frac{\da^j}{\db^j}$. 
This proves the order of the correction. 
Therefore,
\begin{align}
    \nonumber
    \Ex_{U_t \sim \nu_t} \big[\tr(\rho_A(U_t)^2)\big] &\leq \sum_{\substack{i = 0\\ \text{$i$ even}}} ^{2m} \left[ \theta_i \alpha(t)^i + O
    \left(\frac{(\sqrt{\da}\alpha(t))^i}{\dA} \frac{\da}{\db}\right) \right] \\
    &= \sum_{\substack{i = 0\\ \text{$i$ even}}} ^{2m} \left[\frac{\binom{m}{i/2}}{\dA} (\sqrt{\da}\alpha(t))^i + O
    \left(\frac{(\sqrt{\da}\alpha(t))^i}{\dA} \frac{\da}{\db}\right) \right], 
\end{align}
where the constant in the error term for each $i \in \{0, 2, \ldots, 2m\}$ is independent of the local dimension $q$ and only depends on the count of the diagrams.
\end{proof}

Finally, using the asymptotic expression for the average purity of subsystems of $m$ equal-size, disjoint subintervals (\autoref{lem:equal}), we derive the timescale after which such subsystems thermalize to the maximally mixed state. 

\begin{theorem}[Thermalization timescale]
\label{thm:main}
After a time
\begin{align}
    t - 1 \geq \max\left\{\left\lceil\frac{1}{\ln\Big(\frac{q^2 + 1}{2q}\Big)} \Big(\frac{n_A}{2m}\ln(q) + \frac{1}{2}\ln\Big(\frac{1}{\ln(1 + 4\delta^2 \epsilon)}\Big) + \frac{1}{2}\ln(m)\Big)\right\rceil,\ \left\lceil \frac{n_A}{2m} \frac{\ln(q)}{\ln\big(\frac{q^2+1}{2q}\big)}\right\rceil\right\}\,,
\end{align}
on at least a $1 - \epsilon$ fraction of random quantum circuits, subsystems with $n_a < n_b$ are $\delta$-close to the maximally mixed state $\iden_A / \dA$.
\end{theorem}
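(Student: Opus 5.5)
The plan is to combine the Markov/H\"older reduction displayed after the informal theorem with the purity bound of \autoref{lem:equal}. That reduction gives
\begin{equation*}
\Pr_{U_t\sim\nu_t}\Big(\mathcal{D}\Big(\rho_A(U_t),\frac{\iden_A}{\dA}\Big)\geq\delta\Big)\leq\frac{\dA\,\Ex_{U_t\sim\nu_t}\tr(\rho_A^2(U_t))-1}{4\delta^2}.
\end{equation*}
So it suffices to show that $\dA\Ex\tr(\rho_A^2)-1\leq 4\delta^2\epsilon$ once $t$ exceeds the stated threshold. Multiply the bound of \autoref{lem:equal} by $\dA$ and write $x:=\da\alpha(t)^2$. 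The leading terms then sum to $\sum_{k=0}^{m}\binom{m}{k}x^k=(1+x)^m$. The $k=0$ term is exactly $1$, and it cancels the $-1$. Hence
\begin{equation*}
\dA\Ex\tr(\rho_A^2)-1\leq (1+x)^m-1+O\Big(\frac{\da}{\db}\Big)\sum_{k=0}^{m}x^k .
\end{equation*}

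The main term is handled first. Requiring $(1+x)^m\leq 1+4\delta^2\epsilon$ is equivalent to $x\leq (1+4\delta^2\epsilon)^{1/m}-1$. I will use the elementary inequality $(1+y)^{1/m}-1\geq \ln(1+y)/m$, which follows from $e^{u}\geq 1+u$. This shows the main-term condition is implied by $x\leq \ln(1+4\delta^2\epsilon)/m$. Substituting $x=q^{n_a}\big(\tfrac{2q}{q^2+1}\big)^{2(t-1)}$ and $n_a=n_A/m$, then taking logarithms, gives
\begin{equation*}
t-1\geq \frac{1}{\ln\big(\frac{q^2+1}{2q}\big)}\Big(\frac{n_A}{2m}\ln q+\frac12\ln\frac{1}{\ln(1+4\delta^2\epsilon)}+\frac12\ln m\Big).
\end{equation*}
This is the first argument of the max. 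The second argument, $t-1\geq \frac{n_A}{2m}\ln q/\ln\frac{q^2+1}{2q}$, is equivalent to $x\leq 1$. It is what keeps the error series under control: each $x^k\leq 1$, so the geometric-type sum is at most $m+1$, or more precisely the count of diagrams.

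The expected obstacle is the error term $O(\da/\db)$. The hypothesis $n_a<n_b$ is what makes it small: $\da/\db=q^{n_a-n_b}\leq 1/q$, and it is exponentially small when $n_b-n_a$ grows. However, the implicit constants in \autoref{lem:equal} count diagrams, which is up to $\binom{2m}{m}$ by \autoref{prop:main_p1}. So I must check that this combinatorial factor times $\da/\db$ is still absorbed. One route is to note that the theorem's threshold only controls the leading term, and to treat the correction as a subleading contribution in the regime where $\da/\db$ beats the diagram count, for example large $q$ or $n_b-n_a\gtrsim m$. In that regime I would split the budget $4\delta^2\epsilon$ between the two pieces, changing only constants.

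If a cleaner statement is needed, I would redo the bound keeping each $B$-homologous chord's weight $\da^{j}/\db^{j}$ explicitly. Summing over $j\geq1$ then gives a factor like $(1+\da/\db)^{m}-1$, which is small when $m\,\da/\db\ll 1$. Having fixed both conditions, I conclude via Markov that the failure probability is at most $\epsilon$.
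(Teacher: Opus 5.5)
Your proposal is correct and follows essentially the same route as the paper. The shared steps are: H\"older plus Markov, the bound of Lemma~\ref{lem:equal} summed to $(1+\da\alpha(t)^2)^m$, the sufficient condition $m\,\da\alpha(t)^2\le\ln(1+4\delta^2\epsilon)$, and the second threshold $\da\alpha(t)^2\le 1$ to keep the correction series bounded; your inequality $(1+y)^{1/m}-1\ge\ln(1+y)/m$ is just the paper's $(1+x)^m\le e^{mx}$ inverted. Your extra care about the $O(\da/\db)$ term is warranted: the paper's own proof only concludes a failure probability of $\epsilon+O(\da/\db)$, with an $m$-dependent but $q$-independent constant, so the stated $1-\epsilon$ holds only up to a correction that vanishes as $q\to\infty$, as you observe.
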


\begin{proof}
First, let us consider $t - 1 \geq \left\lceil \frac{n_A}{2m} \frac{\ln(q)}{\ln\big(\frac{q^2+1}{2q}\big)}\right\rceil$. Then $(\sqrt{\da} \alpha(t))^i < 1$ and
\begin{align}
    \dA \Ex_{U_t \sim \nu_t}\big[\tr(\rho_A(U_t)^2)\big] - 1 &\leq \sum_{\substack{i = 0 \\ \text{$i$ even}}} ^{2m} \left[\binom{m}{i/2} (\sqrt{\da}\alpha(t))^i \right] + O(\da/\db) - 1 \\
    &= (1 + \da \alpha^2 (t))^m - 1 + O(\dA/\dB) \\
    &\leq \exp(\da \alpha^2 (t) m) - 1 + O(\da/\db)\,,
\end{align}
where the constant in the error term $O(\da / \db)$ is independent of the local dimension. 
All the $q$-dependence in the error term is upper-bounded by $\da / \db$. 
It suffices to pick $\da \alpha^2 (t) m \leq \ln(1 \linebreak[1] + \linebreak[1] 4\delta^2 \epsilon)$ because then
\begin{align}
    \Pr_{U_t \sim \nu_t}\Big(\frac{1}{2}\Big\lVert \rho_A(U_t)  - \frac{\iden_A}{\dA} \Big\rVert_1 \geq \delta\Big)
    &\leq \Pr_{U_t \sim \nu_t}\Big(\dA \tr(\rho_A(U_t)^2) - 1 \geq 4\delta^2\Big) \\
    &\leq \frac{\dA \Ex_{U_t \sim \nu_t}\big[\tr(\rho_A(U_t)^2)\big] - 1}{4 \delta^2} \\
    &= \epsilon + O(\da / \db)\,.
\end{align}
To pick $\da \alpha^2 (t) m \leq \ln(1 + 4\delta^2 \epsilon)$, it in turn suffices to pick $t$ such that $t - 1 \geq \left\lceil \frac{n_A}{2m} \frac{\ln(q)}{\ln\big(\frac{q^2+1}{2q}\big)}\right\rceil$ and, simultaneously,
\begin{align}
    t - 1 \geq \left\lceil\frac{1}{\ln\Big(\frac{q^2 + 1}{2q}\Big)} \Big(\frac{n_A}{2m}\ln(q) + \frac{1}{2}\ln\Big(\frac{1}{\ln(1 + 4\delta^2 \epsilon)}\Big) + \frac{1}{2} \ln(m)\Big)\right\rceil.
\end{align}
\end{proof}

Since $\delta$, $\epsilon$, $m$ are independent of the local dimension $q$, \autoref{thm:main} gives $t \geq 
\big\lceil \frac{n_A}{2m} + O\big(\frac{1}{\ln(q)}\big) \big\rceil$ asymptotically in $q$.

\subsection{Far From Thermal Before Thermalization} \label{subsec:qi_far}

For completeness, we prove that the subsystem $\rho_A(U_t)$, for either contiguous or disjoint intervals, is far from maximally mixed before the thermalization time. 
Note that the trace distance between a mixed state $\rho_A (U_t)$ and the maximally mixed state $\iden_A/\dA$ is lower-bounded by the rank of $\rho_A(U_t)$, $r = {\rm rank}(\rho_A(U_t))$. 
To show this, we first note that $\iden_A$ is diagonal in the eigenbasis of $\rho_A(U_t)$, whose eigenvalues we denote by $\lambda_i(U_t)$ for $i = 1, 2, \ldots, \dA$. Then
\begin{equation}
    \Big\|\rho_A(U_t) - \frac{\iden_A}{\dA}\Big\|_1 = \sum_{i=1}^r \Big| \lambda_i - \frac{1}{\dA}\Big| + \frac{\dA-r}{\dA}\geq \sum_{i=1}^r \left(\lambda_i - \frac{1}{\dA}\right) + 1-\frac{r}{\dA} = 2\left(1 - \frac{r}{\dA}\right)\,.
\end{equation}
Thus,
\begin{equation}
    \dist\Big(\rho_A(U_t), \frac{\iden_A}{\dA}\Big) = \frac{1}{2} \Big\|\rho_A(U_t) - \frac{\iden_A}{\dA}\Big\|_1 \geq 1 - \frac{{\rm rank}(\rho_A(U_t))}{\dA}\,.
\end{equation}
The trace distance remains $\geq \delta$ as long as ${\rm rank} (\rho_A(U_t)) \leq \dA (1 - \delta)$. 
Assuming that $\rho_A(U_t)$ evolves via a brickwork random quantum circuit with periodic boundary conditions (i.e., $U_t \sim \nu_t$), after $t$ even layers, the rank of $\rho_A(U_t)$ can increase by at most $q^{2t}$ (regardless of the gate set) when $A$ is contiguous. 
For the same dynamics but for a subsystem composed of $m$ equal-size, disjoint intervals, the rank of $\rho_A(U_t)$ can increase by at most $q^{2t \times m}$. 
Thus, for 
\begin{equation}
    t\leq \left\lfloor\frac{n_A}{2m} - \frac{1}{2m\ln(q)} \ln\Big(\frac{1}{1 - \delta}\Big) \right\rfloor,
\end{equation}
$\rho_A(U_t)$ remains at least $\delta$-far from the maximally mixed state on the same subsystem. 
This does not contradict the fast scrambling theorem of \cite{brown2015decoupling} because there, the dynamics is nonlocal (or all-to-all), whereas here, the dynamics is a local brickwork random quantum circuit. 
More importantly, we use that locality to compute an upper bound on ${\rm rank}(\rho_A(U_t))$.

In $D \geq 2$ dimensions, after $2D$ brickwork layers, accounting for one application of each gate in each dimension, the maximum increase in rank of a subsystem with $m$ disjoint intervals of $(D-1)$-dimensional areas $\partial A_1, \partial A_2, \ldots, \partial A_m$ is $q^{2 \sum_{i = 1} ^m \partial A_i}$. 
After $t$ brickwork layers of $2$-qudit gates, where $t$ is divisible by $2D$, the maximum increase in rank of a subsystem is $q^{2 t \sum_{i = 1} ^m \partial A_i / 2D}$. 
Therefore, for 
\begin{equation}
    t\leq \left\lfloor \frac{D n_A}{\sum_{i = 1} ^m \partial A_i} - \frac{D}{\ln(q) \sum_{i = 1} ^m \partial A_i} \ln\Big(\frac{1}{1 - \delta}\Big) \right\rfloor,
\end{equation}
$\rho_A(U_t)$ remains at least $\delta$-far from the maximally mixed state on the same subsystem.

Taking this and the previous section together, we conclude that in depth-$t$ 1D brickwork random quantum circuits with periodic boundary conditions, subsystems of $m$ equal-size, disjoint intervals are $\delta$-far from maximally mixed up to $t = \big\lfloor \frac{n_A}{2m} - O\big(\frac{1}{\ln(q)}\big) \big\rfloor$ and $\delta$-close to it after $t = \big\lceil \frac{n_A}{2m} + O\big(\frac{1}{\ln(q)}\big)\big\rceil$. 
Therefore, subsystems thermalize (go from far-from-thermal to close-to-thermal) sharply in the limit of large local dimension $q$. 
We have only been able to prove the sharp transition in 1D because in higher dimensions, the detailed description of the purity dynamics in terms of colliding-random-walker configurations becomes more complicated.

\section{From Quantum Mechanics to Holography} \label{sec:qtohol}

In the infinite-temperature limit, we can make a systematic comparison between timescales computed in holography and in random quantum circuits as the number of connected components $m$ varies.  In particular, we can corroborate quantum-mechanical expectations in holography.

First, recall the geometric intuition.  In the simple case of a disconnected subregion consisting of $m$ identical and equidistributed intervals along the circle, holography indicates a sharp transition at $p = 1/2$.\footnote{In random quantum circuits, the transition to exponentially long complexity growth always happens at $p = 1/2$, but for $p < 1/2$, the precise behavior in time depends on whether the subsystem is contiguous.}  When $p < 1/2$ (i.e., when the intervals are shorter than the spaces between them), the minimal static surface is a collection of caps on each interval, and the enclosed volume does not include the ER bridge.  When $p > 1/2$, the minimal static surface is a collection of caps connecting the endpoints of adjacent intervals, and the enclosed volume includes the ER bridge.

To match onto random quantum circuits, we consider the high-temperature ($\beta\to 0$) limit.  In this limit, the total area of the static RT surface is roughly linear in the area of the boundary subregion when $p < 1/2$ and roughly linear in the area of its complement when $p > 1/2$ (where areas are understood as lengths).  Hence the total static area is roughly the same as it would be for $m = 1$.  However, the total area of the growing HM surface is $m$ times larger than in the case $m = 1$ because it has $m$ times as many connected components.  Therefore, increasing $m$ shrinks the regime of $p$ where the growing surface has initially minimal area and modifies the critical subsystem fraction as $p_\text{crit}\to mp_\text{crit}$ (note that $p_\text{crit} = 0$ in the strict infinite-temperature limit).  Moreover, within that regime, the complexity saturation time decreases by a factor of $m$.

On the other hand, the bulk volumes behave in the opposite way with respect to $m$: increasing $m$ does not affect the total volume enclosed by the growing HM surface because it simply splits up the volume into more parts, but it decreases the total volume enclosed by the static RT surface.

We can now state the relevant time and complexity scales more quantitatively.  We start, in the $m = 1$ case, with a two-sided modification of the Haah-Stanford estimates \cite{Haah:2025hyf} for a symmetric subsystem consisting of identical intervals of length $L$ on the left and the right:\footnote{We derive the relevant formulas carefully in Appendix \ref{app:areasandvolumes}, fixing some slight inaccuracies in the formulas quoted by \cite{Haah:2025hyf}.}
\begin{itemize}
\item For $t_b < L/2$, the dominant RT surface crosses the ER bridge.  Ignoring an additive constant, each connected component has length $\frac{\ell}{\pi}\log\cosh\frac{2\pi t_b}{\beta}$, which is approximately $\frac{2\ell t_b}{\beta}$ for small $\beta$.
\item For $t_b > L/2$, the dominant RT surface does not cross the ER bridge.  Ignoring the same additive constant, each connected component has length $\frac{\ell}{\pi}\log\sinh\frac{\pi L}{\beta}$, which is approximately $\frac{\ell L}{\beta}$ for small $\beta$.
\end{itemize}
The timescale of the complexity drop ($t_b = L/2$) is easy to see on the gravity side, as it only requires comparing areas.  Predicting the magnitude of the complexity drop requires calculating volumes; here, we will be more heuristic.\footnote{Haah and Stanford \cite{Haah:2025hyf} define complexity by subtracting the value in the thermal state, and hence set it to zero at late times.  We do not wish to do this.}  For $t_b < L/2$, the volume enclosed by the dominant RT surface is approximately $\frac{2\ell t_b L}{\beta}$ (the product of the horizontal and vertical dimensions).  Let $2L_h$ denote the length of the portion of a single component of the growing surface that lies outside both the left and the right horizons.  See \autoref{fig:wormholes}.  Since each component of the static surface caps off outside the respective horizon, their combined enclosed volume is bounded above by $2L_h L$.  Note that $L_h$ is independent of $L$ and of time.  Therefore, at the thermalization time $t_b = L/2$, the enclosed volume drops from $O(L^2)$ to $O(L)$.  Hence we reproduce, from holography, the $O(n_A^2)$ to $O(n_A)$ complexity drop in random quantum circuits \cite{fan2025sharptransitionssubsystemcomplexity}.

\begin{figure}[!htb]
\centering
\includegraphics[width=0.8\textwidth]{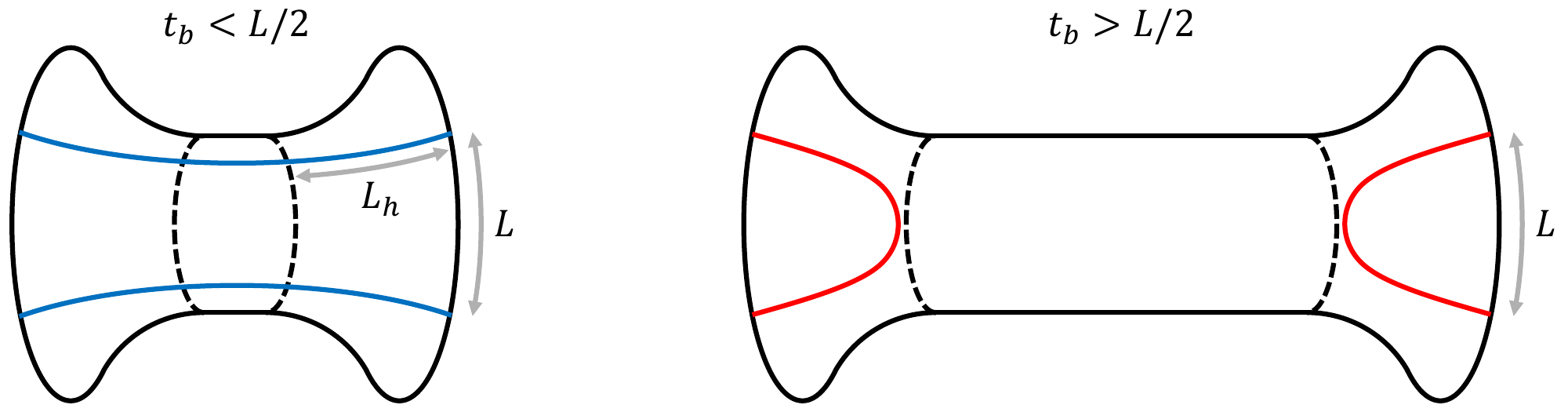}
\caption{Spatial slices of the two-sided geometry before (left) and after (right) the thermalization time.  Horizons are indicated by dashed lines.  Compare to Figure 3 of \cite{Haah:2025hyf}.}
\label{fig:wormholes}
\end{figure}

Now consider $m$ equally spaced intervals of length $L/m$.  In this case, the growing surface has approximate length $2m\cdot \frac{2\ell t_b}{\beta} = \frac{4m\ell t_b}{\beta}$, while the static surface has approximate length $2m\cdot \frac{\ell(L/m)}{\beta} = \frac{2\ell L}{\beta}$.  Hence the transition time is $t_b = L/2m$.  On the other hand:
\begin{itemize}
\item For $t_b < L/2m$, the enclosed volume is approximately $m(L/m)\cdot \frac{2\ell t_b}{\beta} = \frac{2\ell t_b L}{\beta}$.
\item For $t_b > L/2m$, the enclosed volume is bounded above by $m(L/m)\cdot 2L_h = 2L_h L$. (This is a very generous bound, as the enclosed volume should actually decrease with $m$ because the caps extend less far into the bulk with increasing $m$.)
\end{itemize}
Therefore, at the transition, the enclosed volume drops from $O(L^2/m)$ to $O(L)$. This is again consistent with results from random quantum circuits and the discussion at the beginning of \autoref{sec:qi}, specifically the dependence on $m$ proved in \autoref{thm:main}.

The same argument holds in higher dimensions.  In $d$ (boundary spacetime) dimensions, the thermalization time is proportional to the linear size of the subsystem ($\sim n_A^{1/(d - 1)}$), while the area (volume) of the subsystem itself goes like $n_A$.  The complexity just before the transition scales as the product of these two quantities.  Therefore, the complexity drop should go as
\begin{equation}
O(n_A^{1 + 1/(d - 1)}) = O(n_A^{d/(d - 1)})\to O(n_A).
\end{equation}
The value of the complexity before the drop can be reduced by a factor of $m$ if the subsystem is divided evenly into $m$ connected components.  These expectations are reflected in the locality of random quantum circuits.

\section{Discussion} \label{sec:discuss}

In holography, we have demonstrated the counterintuitive finite-temperature effect that a bipartition of a quantum system into noncontiguous subsystems can allow the complexity of the smaller subsystem to grow for an exponentially long time while the complexity of the larger subsystem collapses and equilibrates rapidly.  
We have also demonstrated in both holography and random quantum circuits that dividing a subsystem into multiple disjoint pieces while preserving its total size can reduce the timescale at which its complexity collapses.  
To our knowledge, as of this writing, the former effect has no known quantum-mechanical analogue. 
Clarifying the extent to which this effect is generic to quantum systems at finite temperature, rather than being an artifact of holography, is one of the most important questions left open by our work.

Broadly speaking, the aforementioned counterintuitive effect occurs when we consider subsystems with components that are small compared to the thermal length scale $\beta$.  
Roughly, the mechanism (on a circle) is as follows.  
Let $A$ be a generic noncontiguous subsystem with complement $B$, where $|A| < |B|$.  
By taking the spacing between any two adjacent $A$ intervals to be sufficiently small, we can create a $B$ interval that gives a sufficiently large negative contribution to $S_B$ (via the entropy formula \eqref{eq:f-def}) to make $S_A > S_B$. 
Strictly speaking, we should choose the UV cutoff $\epsilon$ to be small compared to any interval; doing so ensures that all minimal surfaces have positive lengths.

Notice that this procedure only works when $A$ consists of multiple intervals.  
In addition, it only works at finite temperature because taking $\beta\to 0$ prevents one from taking any interval lengths to be small compared to $\beta$.  
Nonetheless, we can consider a different order of limits where instead of taking $\beta\to 0$ first, we also take one of the lengths to 0 in a $\beta$-dependent way.  
In our analysis of the six-interval case, we have seen that this limiting procedure allows us to take the fraction $p$, which governs the relative size of $A$ and $B$, to zero even while the entanglement wedge of $A$ contains the ER bridge (as long as $\rbb > 2/3$ and, surprisingly, at arbitrarily large but finite temperatures).  
Furthermore, we have argued qualitatively that by allowing for arbitrary $m$, we can arrange for $B$ to be arbitrarily large relative to $A$ while the complexity of $A$ grows for an exponentially long time.\footnote{To be clear, we are not conjecturing that arbitrarily small subsystems of large complexity exist in finite-dimen\-sional quantum models. 
It is impossible to see the finite-dimensional analogue of the six-interval high-temperature limit due to the exponentially small fraction in \eqref{eq:squeezed}, but it should be possible to see the analogue of the uberholography example, where it is understood that one halts the recursive hollowing-out procedure when the interval contains too few qubits.}

Our holographic results, while superficially similar to well-known Page-like transitions in time-independent one-sided setups with varying subsystem size, are conceptually distinct:
\begin{itemize}
\item In vacuum AdS, corresponding to the ground state of the quantum system, we reproduce the Page curve: as a function of subsystem size, the entanglement entropy (boundary length) goes up and then down while the complexity (enclosed area) grows monotonically, albeit discontinuously in the case of multiple intervals.  
This picture holds for any number of equidistributed intervals or for a single interval.  
Indeed, for a bipartition of a pure state, the entanglement spectrum is symmetric, but complexity is not.

\item The behavior is different at finite temperature (in a black hole background) due to the homology constraint: both the entanglement entropy and the complexity can increase monotonically with subsystem size.  In this case, the boundary state is thermal and not pure.
\end{itemize}
One benefit of the two-sided black hole is that it allows us to restrict to global pure states on the boundary.

The counterintuitive effects that we find in holography are absent in random quantum circuits. This is because the late-time output states of those circuits are permutation-invariant as an ensemble, and thus late-time subsystem complexities are oblivious to the topology of the subsystems. (Alternatively, random quantum circuits equilibrate to infinite temperature.) We believe that it may be possible to construct examples of random states such that both less-than-half and greater-than-half subsystems have high complexity. However, capturing the exact holographic behavior in finite-dimensional quantum models---large and small subsystems that simultaneously have low and high complexity, respectively---motivates the search for significantly different models for finite-temperature black hole dynamics than random circuits and perhaps even different notions of subsystem complexity altogether.\footnote{For example, prior to the time when random circuits resemble Haar-random unitaries (which are permutation-invariant), definitions of complexity that differ in the geometric locality of the elementary operations may give different answers for subsystem complexity.} 

Besides proving the finite-temperature holographic conjectures in quantum models, it remains to establish the $t = O(\frac{D n_A}{\sum_{i = 1} ^m \partial A_i})$ thermalization timescales for subsystems of $m$ disjoint regions of areas $\{\partial A_i\}$ in $D$-dimensional random quantum circuits. Here, the difficulty resides in determining tight upper bounds for purity dynamics in such circuits, which were available for 1D random circuits via the colliding-random-walks technique of \cite{cotler2022fluctuations}.

\subsection{Mutual Information and Complexity} \label{subsec:hol_mi}

Here, we try to restate what we have observed in terms of mutual information between subintervals. Holographically, a connected entanglement wedge implies that the subintervals have $O(1/G_N)$ mutual information, while a completely disconnected wedge implies zero mutual information among the subintervals.\footnote{All discussion in this subsection only takes into account leading-order effects in $1/G_N$. For example, ``zero mutual information'' means zero at $O(1/G_N)$, ignoring possible subleading corrections.} The latter then means that the subsystem is in a fully separable state of the subintervals.

With this in mind, we start by looking at the four-interval case. There, one observes that a subsystem has exponentially growing complexity if and only if it has a connected entanglement wedge (an entanglement wedge that connects all four intervals $A^{(L)}_1$, $A^{(L)}_2$, $A^{(R)}_1$, $A^{(R)}_2$), provided that no single interval has size greater than or equal to half the system size (for all discussions below, we assume this holds). In other words, a subsystem has exponential complexity if and only if its constituent subintervals have $O(1/G_N)$ mutual information. An equivalent statement is that a subsystem has exponentially growing complexity if and only if the complementary subsystem has a completely disconnected entanglement wedge, i.e., zero mutual information. Of course, in that case, the complexity of the complementary subsystem saturates after a polynomial time.

The role of finite temperature is to allow the smaller subsystem to carry $O(1/G_N)$ mutual information. In the strict infinite-temperature case, the entanglement entropy of an interval is a linear function of its length, so the smaller subsystem is forced to have zero mutual information. However, we have seen that at finite temperature, a smaller subsystem can have $O(1/G_N)$ mutual information if the subinterval sizes are chosen appropriately.

In the six-interval case, one observes that the \textit{smaller} subsystem has exponentially growing complexity if and only if the \textit{smaller} subsystem has a completely connected entanglement wedge, as in \autoref{fig:six}(1).\footnote{For the larger subsystem, the ``only if'' part is not true since the larger subsystem may have exponential complexity without having a completely connected entanglement wedge.} (Recall that the entanglement wedge of the smaller subsystem cannot contain the wormhole when (3), (4), or (5) dominates.) Again, an equivalent statement is that a smaller subsystem has exponentially growing complexity if and only if the complementary subsystem has a completely disconnected entanglement wedge.

For a bipartite system, mutual information is defined as $I(X_1:X_2)=S(X_1)+S(X_2)-S(X_1\cup X_2)$. But for tripartite and general $m$-partite systems, one can define several different quantities that measure different kinds of $m$-partite information. Here, we mention two of them:
\begin{align}
    I_1(X_1:\cdots:X_m)&=\sum_{i=1}^m S(X_i)-S(X_1X_2\cdots X_m),\\
    I_2(X_1:\cdots:X_m)&=\sum_{i=1}^m S(X_i)-\sum_{i<j}^mS(X_iX_j)+\sum_{i<j<k}^m S(X_iX_jX_k)-\cdots +(-1)^{m-1}S(X_1\cdots X_m),
\end{align}
where $X_i X_j=X_i\cup X_j$, etc.
$I_1$ is called multi-information or total correlation. It measures how far the state is from a fully separable state and is always nonnegative.
$I_2$ is typically called $m$-partite information in the holography literature. In general, it does not have a definite sign, but for holographic systems (i.e., systems for which the entanglement entropy is calculated by RT surfaces), this quantity has been proven to be $\leq 0$ for $m=3$ \cite{Hayden_2013}. Note that both quantities reduce to mutual information in the bipartite case.\footnote{One should keep in mind that the mutual information and the quantities $I_1$, $I_2$ do not necessarily capture quantum entanglement between subintervals $X_1,\ldots, X_m$ if the total system $X_1\cdots X_m$ is in a mixed state. This will indeed be the case for us since the total system will be a subsystem $A$ or $B$ of the full system.}
    
How do we characterize a configuration like \autoref{fig:six}(1) in terms of $I_1, I_2$? One can try saying that $A$ has $O(1/G_N)$ $m$-partite information captured by either $I_1$ or $I_2$, but the easiest way to characterize the configuration is to say that $B$ is in a fully separable state:
\begin{equation}
    I_1(B_1^{(L)}:\cdots :B_m^{(L)}:B_1^{(R)}:\cdots :B_m^{(R)})=0.
\end{equation}
Note that $I_2=0$ also holds if $B$ is a fully separable state, but the ``only if'' direction does not hold in general.
Therefore, from what we have observed so far, a smaller subsystem has exponential complexity if and only if the subsystem has $O(1/G_N)$ $m$-partite mutual information, or equivalently the bigger complementary subsystem is in a fully separable state ($I_1=0$).

From here, it is also easy to see why the bigger subsystem can have small complexity. If the bigger subsystem is in a fully separable state of subintervals which are all smaller than half the system size, then each of the subintervals has only polynomial complexity by \cite{fan2025sharptransitionssubsystemcomplexity}, and it should be easy to obtain the total state from these easily preparable subinterval states.

Lastly, we would like to emphasize that these holographic observations are ``too strong'' compared to what we can generally expect.
On the holography side, $O(1/G_N)$ $m$-partite mutual information (in the sense of \autoref{fig:six}(1)) directly implies $O(e^{1/G_N})$ complexity. However, in general, the complexity lower bound from mutual information is very weak (only $O(n)$ where $n$ is the number of qudits in the system, or equivalently $O(1/G_N)$), and cannot account for the exponential complexity.

A similar observation has been made in a different context. The authors of \cite{May_2019, May_2020, May_2022} looked at a certain type of nonlocal task on the boundary, which corresponds to a 2-to-2 scattering process in the bulk. They observed that gravity can perform this task if the initially far-apart systems share $O(\text{system size})$ mutual information, while the boundary quantum system (without the help of gravity) requires at least $O(\exp(\text{system size}))$ entanglement entropy between these two nonlocal regions. These exponential improvements by gravity, both in this setting and in the case of complexity, may suggest a common underlying feature of gravity that remains to be understood.

\subsection{Prospects for Quantum Information}

Quantum verification of finite-temperature holographic conjectures motivates the study of disjoint subsystems of random circuits with conserved charges \cite{rakovszky2018diffusive, khemani2018operator, marvian2022restrictions}. In such random circuits, the ``chemical potential'' corresponding to the conserved charge is expected to play the role of temperature for the limiting subsystem state. Common conserved quantities considered in the literature include $U(1)$ and $SU(q)$ charges. We ask: What are the timescales at which disjoint subsystems of those circuits thermalize? And, more ambitiously: What is their complexity dynamics? 

While we may not have been able to prove finite-temperature holographic predictions in quantum models, we now gather some insights that may be useful for doing so. For verifying those conjectures in quantum state complexity, quantum models may need---in addition to conserved quantities---certain behaviors for the correlation length in their states. This is because simply splitting a less-than-half subsystem into disjoint pieces is not sufficient to witness the counterintuitive holographic predictions. The disjoint pieces cannot be located too far apart, or else the nonlinear nature of RT surfaces will force the larger subsystem to carry the wormhole. This is witnessed by picking a $p$ along the horizontal axis in \autoref{fig:q=1/2} and \autoref{fig:q=1/10} and following a vertical line up: at large enough $\rb$, we are led outside the counterintuitive regime. $U(1)$- and $SU(q)$-invariant states by themselves do not have a notion of locality, as they are permutation-invariant, but local dynamics (e.g., local symmetric random quantum circuits) that preserve those global conserved quantities can restrict states away from permutation invariance \cite{marvian2022restrictions}. We may also expect to demand more than just conserved charges from random circuits so that their late-time states adequately model finite-temperature black hole complexity. One proposition is to consider translationally invariant random circuits, or, in other words, random circuits with $C_n$ conserved charges, where $C_n$ denotes the cyclic group on $n$ letters. Even in that scenario, we expect that all less-than-half subsystems are exponentially close in trace distance to the maximally mixed state and thus of low complexity. However, we expect to witness greater deviation away from the maximally mixed state for subsystems of Haar-random states with versus without translation invariance. Adding further symmetries alongside $C_n$ may prove useful for modeling holographic predictions. Another proposition is to model the late-time state as a matrix product state (MPS) with periodic boundary conditions and with certain transition matrix gap behavior. The gap directly determines the correlation between disjoint regions \cite{svetlichnyy2024matrix}, and a simple ansatz for MPSs can capture entanglement properties of $m$-disjoint subsystems \cite{akhtar2020multiregion}. Finally, we note that the extra need for a certain correlation length in the quantum states may not be necessary if our working notion of quantum complexity is Krylov operator complexity as opposed to state complexity. Witnessing our holographic conjectures for Krylov operator complexity or in features of operator growth remains an exciting open research direction \cite{Caputa:2026ldd}.

\section*{Acknowledgements}

We thank Matteo Ippoliti, Alex May, Robert Myers, Shan-Ming Ruan, Ainesh Sanyal, and Christopher Vairogs for helpful discussions. NHJ and SM acknowledge support in part from DOE grant DE-SC0025615. AK and MK were supported in part by DOE
grant DE-SC0022021 and by a grant from the Simons Foundation (Grant 651678, AK). ChatGPT (5.5 and 5.6 Pro Sol) and Gemini (2.5 Pro) were used to help in generating the TikZ code for Figures \ref{fig:six}, \ref{fig:fractalBTZ}, and \ref{fig:2m-(3)}, suggest improvements to analytics for the infinite-temperature limit of the four- and six-interval cases in Sections \ref{subsec:analyticstemp} and \ref{subsec:analyticssix}, and for proofreading. 

\appendix

\section{Areas and Volumes in Holography} \label{app:areasandvolumes}

In this appendix, we derive some standard geometric quantities in holography to facilitate comparison to random quantum circuits.  We mainly consider the AdS$_3$-Schwarzschild black hole with asymptotic boundary $S^1_\ell\times \mathbb{R}$:
\begin{equation}
ds^2 = -f(r)\, dt^2 + \frac{dr^2}{f(r)} + r^2\, d\theta^2, \qquad f(r) = 1 - \mu + \frac{r^2}{\ell^2}.
\end{equation}
The horizon radius is $r_h = \ell\sqrt{\mu - 1}$, in terms of which $f(r) = (r^2 - r_h^2)/\ell^2$, and
\begin{equation}
\beta = \frac{2\pi\ell^2}{r_h} = \frac{2\pi\ell}{\sqrt{\mu - 1}}.
\end{equation}
Note that we take $\theta\in [0, \pi]$ to be the polar angle rather than the azimuthal angle.  We take as our subsystem the ball (arc) given by $\theta\in [0, \theta_0]$ for some fixed $\theta_0$.

There are three length scales in the problem:
\begin{itemize}
\item $\ell$, the curvature radius of AdS (which equals the radius\footnote{Not the circumference, like in the main text.} of the boundary circle).
\item $L$, the linear size of the subsystem (which equals $2\ell\theta_0$).
\item $\beta$, the temperature of the black hole.
\end{itemize}
Conformal symmetry implies that only two dimensionless ratios of these scales are meaningful.  For example, $\beta/\ell$ characterizes the black hole, while $L/\ell$ characterizes the subsystem size.  In the limit of small $\beta/\ell$ (high temperature and large volume), the black hole background becomes planar.  The Hawking-Page transition occurs at the critical value $\beta/\ell = 2\pi$.

We have $L = 2\ell\theta_0$ and $p = \theta_0/\pi$.  We restrict our attention to the range $p < 1/2$, for which the complexity saturates quickly.  We focus on the regime of small $\beta/\ell$ by taking $\mu\gg 1$; in this planar limit, it is easy to derive scaling laws.  If $\beta/\ell$ is small, then:
\begin{itemize}
\item For $L\gg \beta$, the complexity saturates at roughly the thermalization time $L/2$.
\item For $L\lesssim \beta$, thermalization occurs essentially instantaneously.  We see that $L\sim \beta$ when $\theta_0 \linebreak[1] \sim \linebreak[1] \beta/\ell$.  This relation defines the critical angle at which the saturation time vanishes.
\end{itemize}

\subsection{Static RT Surface Area}

The late-time RT surface is described by a time-independent $\theta(r)$ at fixed $t_b$.  One can read off the asymptotic value $\theta = \theta_0$ for large $r$.  To compute $\theta(r)$, we write the area functional as
\begin{equation}
A[\theta(r)] = 2\int dr\, \sqrt{\frac{1}{f(r)} + r^2\theta'^2}.
\label{areafunctionalstaticr}
\end{equation}
Since the integrand has no explicit $\theta$-dependence, we can write the variational equation in terms of a conserved quantity:
\begin{equation}
\frac{d}{dr}\left(\frac{r^2\theta'}{\sqrt{f(r)^{-1} + r^2\theta'^2}}\right) = 0.
\end{equation}
We impose the initial condition $\theta'(r_0) = \infty$ at $r_0 > r_h$.  Since $f(r)$ is well-behaved at $r = r_0$, as $r\to r_0$ from above, the quantity in parentheses approaches $r_0$.  Therefore, the desired solution is
\begin{equation}
\frac{r^2\theta'}{\sqrt{f(r)^{-1} + r^2\theta'^2}} = r_0 \Longleftrightarrow \theta = \int_{r_0}^r \frac{d\rho}{\rho\sqrt{f(\rho)[(\rho/r_0)^2 - 1]}}.
\end{equation}
Note that the integral is well-behaved at $r = r_0$ even though the integrand diverges, so we indeed have $\theta(r_0) = 0$.  Note also that
\begin{align}
\theta_0 = \theta(\infty) &= \ell r_0\int_{r_0}^\infty \frac{d\rho}{\rho\sqrt{(\rho^2 - r_h^2)(\rho^2 - r_0^2)}} \\
&= \frac{\ell r_0}{2}\int_{r_0^2}^\infty \frac{du}{u\sqrt{(u - r_h^2)(u - r_0^2)}} \\
&= \frac{\ell}{2r_h}\log\left(\frac{r_0 + r_h}{r_0 - r_h}\right).
\end{align}
We can write $r_0$ in terms of $\theta_0$ as follows:
\begin{equation}
r_0 = r_h\coth\frac{r_h\theta_0}{\ell}.
\end{equation}
Letting $r_c$ denote the cutoff radius, we compute the corresponding area:
\begin{align}
A &= 2\int_{r_0}^{r_c} dr\, \sqrt{\frac{1}{f(r)} + r^2\theta'^2} \\
&= 2\ell\int_{r_0}^{r_c} dr\, \frac{r}{\sqrt{(r^2 - r_h^2)(r^2 - r_0^2)}} \\
&= \ell\int_{r_0^2}^{r_c^2} \frac{du}{\sqrt{(u - r_h^2)(u - r_0^2)}} \\
&= 2\ell\log\left(\frac{\sqrt{r_c^2 - r_h^2} + \sqrt{r_c^2 - r_0^2}}{\sqrt{r_0^2 - r_h^2}}\right).
\end{align}
At large $r_c$, we have
\begin{equation}
A = 2\ell\log\left(\frac{2r_c}{\sqrt{r_0^2 - r_h^2}}\right) + O(r_c^{-2}) = 2\ell\log\left(\frac{2r_c}{r_h}\sinh\frac{r_h\theta_0}{\ell}\right) + O(r_c^{-2}).
\end{equation}
The relation between the UV and IR cutoffs is
\begin{equation}
r_c = \frac{\ell^2}{\epsilon},
\label{uvirrelation}
\end{equation}
up to scheme-dependent constant factors that contribute only additive constants to the entropy.  Together, \eqref{uvirrelation} and the Brown-Henneaux relation $c = \frac{3\ell}{2G_N}$ \cite{Brown:1986nw} imply that
\begin{equation}
\frac{r_h\theta_0}{\ell} = \frac{\pi L}{\beta}, \qquad \frac{2r_c}{r_h} = \frac{\beta}{\pi\epsilon}, \qquad \frac{2\ell}{4G_N} = \frac{c}{3},
\end{equation}
and the Ryu-Takayanagi formula $S = \frac{A}{4G_N}$ then gives
\begin{equation}
S = \frac{c}{3}\log\left(\frac{\beta}{\pi\epsilon}\sinh\frac{\pi L}{\beta}\right) + O(\epsilon^2).
\end{equation}

\subsection{Growing HRT/HM Surface Area}

Spherical symmetry guarantees that the growing RT (i.e., HM) surface can be described by $r(t)$ and $\theta(t)$.  For any solution, $r$ ranges from some minimal radius $r_s < r_h$ to $\infty$.

In terms of $r(t)$ and $\theta(t)$, the area functional is
\begin{equation}
A[r(t), \theta(t)] = 2\int dt\, \sqrt{-f(r) + \frac{\dot{r}^2}{f(r)} + r^2\dot{\theta}^2}.
\label{areafunctionalgrowingt}
\end{equation}
Since the integrand has no explicit dependence on $t$ or $\theta$, we obtain two conserved quantities:
\begin{equation}
\frac{d}{dt}\left(\frac{f(r)}{\sqrt{-f(r) + \dot{r}^2/f(r) + r^2\dot{\theta}^2}}\right) = \frac{d}{dt}\left(\frac{r^2\dot{\theta}}{\sqrt{-f(r) + \dot{r}^2/f(r) + r^2\dot{\theta}^2}}\right) = 0.
\end{equation}
We impose the following boundary conditions at the symmetric point $t = 0$:
\begin{equation}
r(0) = r_s, \qquad \dot{r}(0) = 0, \qquad \theta(0) = \theta_s, \qquad \dot{\theta}(0) = 0.
\end{equation}
The two free parameters $r_s$ and $\theta_s$ determine the asymptotic values $t_b$ and $\theta_0$ at $r = \infty$:
\begin{equation}
r(t_b) = \infty, \qquad \theta(t_b) = \theta_0.
\end{equation}
The corresponding solution is given by
\begin{equation}
\frac{f(r)}{\sqrt{-f(r) + \dot{r}^2/f(r)}} = -\sqrt{-f(r_s)}, \qquad \dot{\theta} = 0.
\end{equation}
Note that $f(r_s) < 0$.  Explicitly, we have
\begin{equation}
\dot{r} = -f(r)\sqrt{1 - \frac{f(r)}{f(r_s)}}, \qquad \theta = \theta_s = \theta_0,
\end{equation}
where we choose the sign of $\dot{r}$ as in Appendix B.2 of \cite{fan2025sharptransitionssubsystemcomplexity}.  Note that $f(r) < 0$ for $r < r_h$, while $f(r) > 0$ for $r > r_h$.  We now write
\begin{equation}
t(r) = -\int_{r_s}^r \frac{d\rho}{f(\rho)\sqrt{1 - \frac{f(\rho)}{f(r_s)}}},
\end{equation}
which is understood as a principal value integral about the pole at $r = r_h$.  Since $f(r)$ achieves its minimum at $r = 0$, this integral diverges (i.e., $t_b\to\infty$) as $r_s\to 0$.  Using the indefinite integral
\begin{align}
-\int \frac{d\rho}{f(\rho)\sqrt{1 - \frac{f(\rho)}{f(r_s)}}} &= -\ell^2\sqrt{r_h^2 - r_s^2}\int \frac{d\rho}{(\rho^2 - r_h^2)\sqrt{\rho^2 - r_s^2}} \\
&= \frac{\ell^2}{2r_h}\log\left(\frac{r_h\sqrt{r^2 - r_s^2} + r\sqrt{r_h^2 - r_s^2}}{r_h\sqrt{r^2 - r_s^2} - r\sqrt{r_h^2 - r_s^2}}\right) + C,
\end{align}
we have for $r_s\leq r < r_h$ that
\begin{equation}
t(r) = \frac{\ell^2}{2r_h}\log\left(\frac{r\sqrt{r_h^2 - r_s^2} + r_h\sqrt{r^2 - r_s^2}}{r\sqrt{r_h^2 - r_s^2} - r_h\sqrt{r^2 - r_s^2}}\right)
\end{equation}
and for $r > r_h$ that
\begin{equation}
t(r) = -\left(\int_{r_h + \epsilon}^r + \int_{r_s}^{r_h - \epsilon}\right)\frac{d\rho}{f(\rho)\sqrt{1 - \frac{f(\rho)}{f(r_s)}}} = \frac{\ell^2}{2r_h}\log\left(\frac{r_h\sqrt{r^2 - r_s^2} + r\sqrt{r_h^2 - r_s^2}}{r_h\sqrt{r^2 - r_s^2} - r\sqrt{r_h^2 - r_s^2}}\right),
\end{equation}
where the boundary terms at $r_h\pm \epsilon$ reduce to
\begin{equation}
\mp\frac{\ell^2}{2r_h}\log\left(\frac{2r_h(r_h^2/r_s^2 - 1)}{\epsilon} + O(1)\right),
\end{equation}
which cancel ($\epsilon$ is not to be confused with the CFT UV cutoff).  We read off that
\begin{equation}
t_b = t(\infty) = \frac{\ell^2}{2r_h}\log\left(\frac{r_h + \sqrt{r_h^2 - r_s^2}}{r_h - \sqrt{r_h^2 - r_s^2}}\right),
\end{equation}
which we can invert to obtain
\begin{equation}
r_s = \frac{r_h}{\cosh\frac{r_h t_b}{\ell^2}}.
\end{equation}
We now compute the corresponding area.  Using time reflection symmetry, it can be written as
\begin{align}
A &= 4\int_0^{t_b} dt\, \sqrt{-f(r) + \frac{\dot{r}^2}{f(r)}} \\
&= 4\int_{r_s}^\infty \frac{dr}{|\dot{r}|}\, \sqrt{-f(r) + \frac{\dot{r}^2}{f(r)}} \\
&= 4\int_{r_s}^\infty \frac{dr}{\sqrt{f(r) - f(r_s)}} \\
&= 4\ell\int_{r_s}^\infty \frac{dr}{\sqrt{r^2 - r_s^2}}.
\end{align}
In terms of the cutoff radius $r_c$, we have
\begin{equation}
A = 4\ell\int_{r_s}^{r_c} \frac{dr}{\sqrt{r^2 - r_s^2}} = 2\ell\log\left(\frac{r_c + \sqrt{r_c^2 - r_s^2}}{r_c - \sqrt{r_c^2 - r_s^2}}\right).
\end{equation}
At large $r_c$, this is
\begin{equation}
A = 4\ell\log\left(\frac{2r_c}{r_s}\right) + O(r_c^{-2}) = 4\ell\log\left(\frac{2r_c}{r_h}\cosh\frac{r_h t_b}{\ell^2}\right) + O(r_c^{-2}).
\end{equation}
In terms of CFT parameters, we therefore have
\begin{equation}
S = \frac{2c}{3}\log\left(\frac{\beta}{\pi\epsilon}\cosh\frac{2\pi t_b}{\beta}\right) + O(\epsilon^2).
\end{equation}

\subsection{Area Comparison}

For the sake of comparing areas (lengths), it is convenient to write
\begin{equation}
A_\text{RT} = 2\ell\log\left(\frac{\beta}{\pi\epsilon}\sinh\frac{\pi L}{\beta}\right) + O(\epsilon^2)
\end{equation}
for the area of the static RT surface and
\begin{equation}
A_\text{HM} = 2\ell\log\left(\frac{\beta}{\pi\epsilon}\cosh\frac{2\pi t_b}{\beta}\right) + O(\epsilon^2)
\end{equation}
for \emph{half} the area of the growing RT surface (alternatively, the area of a single connected component of the growing RT surface).  For small $\beta$, these expressions are approximately $2\ell$ times $\pi L/\beta$ and $2\ell$ times $2\pi t_b/\beta$, respectively.  Let us be more precise.  Tracing back through the error analysis and expressing the errors in terms of dimensionless quantities, we have
\begin{align}
A_\text{RT} &= 2\ell\log\left(\frac{2r_c}{r_h}\sinh\frac{r_h\theta_0}{\ell}\right) + O\left(\frac{r_0^2 + r_h^2}{r_c^2}\right) \\
&= 2\ell\log\left(\frac{\beta}{\pi\epsilon}\sinh\frac{\pi L}{\beta}\right) + O\left(\frac{\epsilon^2}{\beta^2}\left(\coth^2\frac{\pi L}{\beta} + 1\right)\right).
\end{align}
Assuming that $\epsilon\ll \beta\ll L$, this is
\begin{equation}
A_\text{RT} = 2\ell\left[\frac{\pi L}{\beta} + \log\frac{\beta}{2\pi\epsilon} + O(e^{-2\pi L/\beta}) + O\left(\frac{\epsilon^2}{\beta^2}\right)\right].
\end{equation}
We also have
\begin{align}
A_\text{HM} &= 2\ell\log\left(\frac{2r_c}{r_h}\cosh\frac{r_h t_b}{\ell^2}\right) + O\left(\frac{r_s^2}{r_c^2}\right) \\
&= 2\ell\log\left(\frac{\beta}{\pi\epsilon}\cosh\frac{2\pi t_b}{\beta}\right) + O\left(\frac{\epsilon^2}{\beta^2\cosh^2\frac{2\pi t_b}{\beta}}\right).
\end{align}
Assuming that $\epsilon\ll \beta\ll L\sim t_b$, this is
\begin{equation}
A_\text{HM} = 2\ell\left[\frac{2\pi t_b}{\beta} + \log\frac{\beta}{2\pi\epsilon} + O(e^{-4\pi t_b/\beta}) + O\left(\frac{\epsilon^2}{\beta^2}\right)\right].
\end{equation}
Hence the transition time at which $A_\text{RT} = A_\text{HM}$ is
\begin{equation}
t_\text{transition} = \frac{L}{2} + O(\beta e^{-2\pi L/\beta}) + O\left(\frac{\epsilon^2}{\beta}\right),
\end{equation}
where we have made the error terms explicit (note that the $\log\frac{\beta}{2\pi\epsilon}$ terms cancel).

\subsection{Volume Comparison}

The static caps are time-independent and lie outside the horizon, so their volumes can be computed on a constant-time slice.  For the static RT surface with area functional \eqref{areafunctionalstaticr}, we fix $t = t_b$ but integrate $\theta$ from 0 to $\theta(r)$, so the induced metric is
\begin{equation}
ds^2 = \frac{dr^2}{f(r)} + r^2\, d\theta^2.
\end{equation}
The corresponding volume functional is
\begin{equation}
V[\theta(r)] = 2\int dr\, \frac{r\theta}{\sqrt{f(r)}}.
\end{equation}
Using the indefinite integral
\begin{equation}
\int \frac{d\rho}{\rho\sqrt{(\rho^2 - r_h^2)(\rho^2 - r_0^2)}} = \frac{1}{2r_h r_0}\log\left(\frac{r_h\sqrt{\rho^2 - r_0^2} + r_0\sqrt{\rho^2 - r_h^2}}{r_h\sqrt{\rho^2 - r_0^2} - r_0\sqrt{\rho^2 - r_h^2}}\right) + C,
\end{equation}
we have the explicit formula
\begin{equation}
\theta = \ell r_0\int_{r_0}^r \frac{d\rho}{\rho\sqrt{(\rho^2 - r_h^2)(\rho^2 - r_0^2)}} = \frac{\ell}{2r_h}\log\left(\frac{r_0\sqrt{r^2 - r_h^2} + r_h\sqrt{r^2 - r_0^2}}{r_0\sqrt{r^2 - r_h^2} - r_h\sqrt{r^2 - r_0^2}}\right).
\end{equation}
Therefore, using a radial cutoff $r_c$, we have
\begin{equation}
V = \frac{\ell^2}{r_h}\int_{r_0}^{r_c} dr\, \frac{r}{\sqrt{r^2 - r_h^2}}\log\left(\frac{r_0\sqrt{r^2 - r_h^2} + r_h\sqrt{r^2 - r_0^2}}{r_0\sqrt{r^2 - r_h^2} - r_h\sqrt{r^2 - r_0^2}}\right).
\end{equation}
Now using the indefinite integral
\begin{align}
\int dr\, \frac{r}{\sqrt{r^2 - r_h^2}}&\log\left(\frac{r_0\sqrt{r^2 - r_h^2} + r_h\sqrt{r^2 - r_0^2}}{r_0\sqrt{r^2 - r_h^2} - r_h\sqrt{r^2 - r_0^2}}\right) \nonumber \\
= \sqrt{r^2 - r_h^2}&\log\left(\frac{r_0\sqrt{r^2 - r_h^2} + r_h\sqrt{r^2 - r_0^2}}{r_0\sqrt{r^2 - r_h^2} - r_h\sqrt{r^2 - r_0^2}}\right) - 2r_h\arccos\frac{r_0}{r} + C,
\end{align}
we obtain
\begin{equation}
V = \ell^2\left[\frac{\sqrt{r_c^2 - r_h^2}}{r_h}\log\left(\frac{r_0\sqrt{r_c^2 - r_h^2} + r_h\sqrt{r_c^2 - r_0^2}}{r_0\sqrt{r_c^2 - r_h^2} - r_h\sqrt{r_c^2 - r_0^2}}\right) - 2\arccos\frac{r_0}{r_c}\right].
\end{equation}
At large $r_c$, we have
\begin{align}
V &= \ell^2\left[\frac{r_c}{r_h}\log\frac{r_0 + r_h}{r_0 - r_h} - \pi + O\left(\frac{r_h}{r_c}\log\frac{r_0 + r_h}{r_0 - r_h}\right) + O\left(\frac{r_0}{r_c}\right)\right] \\
&= \ell^2\left[\frac{L}{\epsilon} - \pi + O\left(\frac{\epsilon L}{\beta^2}\right) + O\left(\frac{\epsilon}{\beta}\coth\frac{\pi L}{\beta}\right)\right],
\end{align}
where we have used that $r_0 > r_h$.  Assuming that $\epsilon\ll \beta\ll L$, we have
\begin{equation}
\coth\frac{\pi L}{\beta} = 1 + O(e^{-2\pi L/\beta})
\end{equation}
and therefore
\begin{equation}
V = \ell^2\left[\frac{L}{\epsilon} + O(1) + O\left(\frac{\epsilon L}{\beta^2}\right)\right],
\end{equation}
including error terms.

For the growing HM surface with area functional \eqref{areafunctionalgrowingt}, the enclosed volume is more subtle to compute.  For the static RT surface, the maximal-volume slice lies at constant time.  However, for the growing HM surface, the CV prescription requires us to find the volume of the maximal-volume slice within the entanglement wedge that ends on the HM surface, and this slice does not lie at constant time.  We do not solve the full embedding problem for the maximal slice, but the qualitative picture is as follows.
\begin{itemize}
\item The maximal-volume slice for the global state approaches a finite radial value $r = r_f$ at late times $t_b$ (except near the boundary), from which one reads off the linear growth factor in the global CV proposal.  In $d = 2$, we have $r_f = r_h/\sqrt{2}$.
\item For a subsystem in $d = 2$, the codimension-two HM surface (which consists of two disjoint segments) approaches $r = 0$ at late times.  Therefore, at late times, the codimension-one maximal-volume slice ending on this surface approaches $r = 0$ at its boundaries.  In between these boundaries, however, the maximal slice dips down to the value $r = r_f$; this is the part that sees linear growth.
\end{itemize}
In $d = 2$, we find that the leading contribution to the late-time volume growth is
\begin{equation}
V = \left(\frac{4\pi^2\ell^2 L}{\beta^2}\right)t_b + O(1),
\end{equation}
where the $O(1)$ term includes both boundary effects and a divergent constant.  The derivation is given below.  Without specifying the regularization scheme, setting $t_b = L/2$ does not allow us to quantify the volume (complexity) jump at the HM transition.

\subsection{Higher Dimensions}

We now consider the AdS$_{d+1}$-Schwarzschild black hole ($d\geq 2$) with asymptotic boundary $S^{d-1}_\ell\times \mathbb{R}$:
\begin{equation}
ds^2 = -f(r)\, dt^2 + \frac{dr^2}{f(r)} + r^2\, d\Omega_{d-1}^2, \qquad f(r) = 1 + \frac{r^2}{\ell^2} - \frac{\mu}{r^{d-2}}.
\end{equation}
The horizon radius $r_h$ is the outermost solution to $f(r_h) = 0$, and
\begin{equation}
\beta = \frac{4\pi\ell^2 r_h}{dr_h^2 + (d - 2)\ell^2}.
\end{equation}
Writing the metric of the unit sphere $S^{d-1}_1$ as $d\Omega_{d-1}^2 = d\theta^2 + \sin^2\theta\, d\Omega_{d-2}^2$ where $\theta\in [0, \pi]$, we consider the ball given by $\theta\in [0, \theta_0]$ for some fixed $\theta_0$.

Again, the planar black hole limit is that of small $\beta/\ell$.  The Hawking-Page transition occurs at the critical value $\beta/\ell = \frac{2\pi}{d - 1}$ ($r_h = \ell$).  Small $\beta/\ell$ means $\mu\gg \ell^{d-2}$, so that
\begin{equation}
r_h\approx (\mu\ell^2)^{1/d}, \qquad \beta\approx \frac{4\pi\ell^2}{d(\mu\ell^2)^{1/d}}.
\end{equation}
For small $\beta/\ell$, the critical angle occurs when $L\sim \beta$.  Letting
\begin{equation}
\omega_{d-1} = \operatorname{vol}(S^{d-1}_1) = \frac{2\pi^{d/2}}{\Gamma(\frac{d}{2})},
\end{equation}
the subsystem fraction is
\begin{equation}
p(\theta_0) = \frac{\omega_{d-2}}{\omega_{d-1}}\int_0^{\theta_0} d\theta\, \sin^{d-2}\theta = \frac{\Gamma(\frac{d}{2})}{\pi^{1/2}\Gamma(\frac{d - 1}{2})}\int_0^{\theta_0} d\theta\, \sin^{d-2}\theta,
\end{equation}
so the subsystem has volume $p(\theta_0)\omega_{d-1}\ell^{d-1}$ and linear size $L\sim p(\theta_0)^{\frac{1}{d - 1}}\ell$.  We take $p < 1/2$.  For $L\gg \beta$, the complexity saturates at the thermalization time, which grows as $p^{1/(d - 1)}$.  For $L\lesssim \beta$, thermalization occurs instantaneously.  We see that $L\sim \beta$ when $p(\theta_0)\sim (\beta/\ell)^{d-1}$, which defines the critical angle.

We first compare the areas of the static and growing RT surfaces.  For both surfaces, we use the large-$r$ asymptotics
\begin{equation}
t(r) = t_b + O(r^{-2}), \qquad \theta(r) = \theta_0 + O(r^{-2}).
\end{equation}
The static RT surface is described by a time-independent $\theta(r)$ that asymptotes to $\theta_0$ at large $r$.  The minimum radius is $r_0 > r_h$.  The static area functional is
\begin{equation}
A[\theta(r)] = \omega_{d-2}\int dr\, (r\sin\theta)^{d-2}\sqrt{\frac{1}{f(r)} + r^2\theta'^2}.
\end{equation}
The growing RT surface can be described by both $t(r)$ and $\theta(r)$, which asymptote to $t_b$ and $\theta_0$ at large $r$, respectively.  The minimum radius is $r_s < r_h$.  The growing area functional is
\begin{equation}
A[t(r), \theta(r)] = \omega_{d-2}\int dr\, (r\sin\theta)^{d-2}\sqrt{\frac{1}{f(r)} - f(r)t'^2 + r^2\theta'^2}.
\end{equation}
In both cases (considering only one side of the growing RT surface), the area has leading divergence
\begin{align}
A &\sim \omega_{d-2}\int^{r_c} dr\, \frac{(r\sin\theta)^{d-2}}{f(r)^{1/2}} \\
&\sim \omega_{d-2}\ell\sin^{d-2}\theta_0\int^{r_c} dr\, r^{d-3} \\
&\sim \frac{\omega_{d-2}\ell r_c^{d-2}\sin^{d-2}\theta_0}{d - 2} \\
&= \frac{p'(\theta_0)\omega_{d-1}\ell^{2d-3}}{(d - 2)\epsilon^{d-2}} \\
&= \frac{\ell^{d-2}}{d - 2}\frac{d\operatorname{vol}(\text{subsystem})/d\theta_0}{\epsilon^{d-2}}.
\end{align}
The only way to see the saturation of complexity is to subtract the leading divergences before comparing the areas.

To compute the volume of the static surface, we fix $t = t_b$ but integrate $\theta$ from 0 to $\theta(r)$, so the induced metric is
\begin{equation}
ds^2 = \frac{dr^2}{f(r)} + r^2(d\theta^2 + \sin^2\theta\, d\Omega_{d-2}^2).
\end{equation}
The corresponding volume functional is
\begin{equation}
V[\theta(r)] = \omega_{d-2}\int dr\, \frac{r^{d-1}}{\sqrt{f(r)}}\int_0^{\theta(r)} d\vartheta\, \sin^{d-2}\vartheta.
\end{equation}
To extract the leading divergence, we use $f(r)\sim r^2/\ell^2$ and $\theta\sim \theta_0$ at large $r$:
\begin{align}
V &\sim \omega_{d-2}\ell\int^{r_c} dr\, r^{d-2}\int_0^{\theta_0} d\theta\, \sin^{d-2}\theta \\
&\sim \frac{p(\theta_0)\omega_{d-1}\ell r_c^{d-1}}{d - 1} \\
&= \frac{p(\theta_0)\omega_{d-1}\ell^{2d - 1}}{(d - 1)\epsilon^{d-1}} \\
&= \frac{\ell^d}{d - 1}\frac{\operatorname{vol}(\text{subsystem})}{\epsilon^{d-1}}.
\end{align}
We can only partially extend the analytical results for areas and volumes to higher dimensions because the RT surfaces cannot be determined analytically in higher dimensions.  In $d > 2$, we only determine the leading divergences in the areas and volumes.

Finally, consider the volume of the growing surface.  We can quantify the late-time linear growth of the maximal-volume slice analytically, although not the divergent part.  First, note that the late-time radius of the global maximal-volume slice and the late-time radius of the HM surface are determined by different variational problems.  The former ($r_f$) is given by \cite{Stanford:2014jda}
\begin{equation}
r_f = \underset{r\in (0, r_h)}{\operatorname{arg\, max}}\left[r^{d-1}\sqrt{-f(r)}\right],
\end{equation}
or, equivalently,
\begin{equation}
2(d - 1)f(r_f) + r_f f'(r_f) = 0.
\end{equation}
On the other hand, the HM area functional contains a transverse factor of $r^{d-2}$, so the corresponding late-time HM radius $r_\text{HM}$ satisfies
\begin{equation}
2(d - 2)f(r_\text{HM}) + r_\text{HM}f'(r_\text{HM}) = 0.
\end{equation}
The distinction between $r_f$ and $r_\text{HM}$ is dramatic in $d = 2$, where $r_f = r_h/\sqrt{2}$ and $r_\text{HM} = 0$.  The fact that the bridge-crossing HM surface approaches $r = 0$ does not imply that the CV volume growth vanishes because most of the volume of the maximal slice for a subsystem is concentrated near the constant interior radius $r_f$.

We can estimate the late-time CV growth rate for a subsystem as follows.  Recall that the full rotationally symmetric maximal-volume slice can be described by an embedding function $r(t)$, and therefore the volume functional
\begin{equation}
V = \omega_{d-1}\int dt\, r^{d-1}\sqrt{-f(r) + \frac{\dot{r}^2}{f(r)}}.
\end{equation}
At late times, we have
\begin{equation}
V\to \omega_{d-1}r_f^{d-1}\sqrt{-f(r_f)}\int dt = 2\omega_{d-1}r_f^{d-1}\sqrt{-f(r_f)}t_b,
\end{equation}
where we have used $t\leftrightarrow -t$ symmetry to evaluate the integral.  For a subregion specified by $\theta_0$ whose entanglement wedge crosses the ER bridge, a similar local argument applies.  Away from its endpoints on the HM surface, the relevant maximal slice is nearly indistinguishable from the maximal slice for the full black hole.  Therefore, we replace $\omega_{d-1}$ by the angular volume of the subregion of $S^{d-1}$ corresponding to the boundary angle $\theta_0$:
\begin{equation}
\omega_{d-1}(\theta_0)\equiv \omega_{d-2}\int_0^{\theta_0} d\theta\, \sin^{d-2}\theta.
\end{equation}
Equivalently, this is the dimensionless volume of the angular cross-section of the bridge included in the entanglement wedge.  We find that
\begin{equation}
V(\theta_0)\to 2\omega_{d-1}(\theta_0)r_f^{d-1}\sqrt{-f(r_f)}t_b + O(1)
\end{equation}
at late times.  The $O(1)$ endpoint effects, the finite-time approach to the linear regime, and the precise volume drop at the HM transition are not fixed by this argument.  Moreover, we have assumed that the linear size $L$ of the subsystem is large compared to $\beta$ so that endpoint effects do not compete with the long-throat contribution to the volume.  We can also estimate the growth rate in the high-temperature, planar black hole limit.  Assuming that
\begin{equation}
f(r)\approx \frac{r^2}{\ell^2} - \frac{\mu}{r^{d-2}}\approx \frac{r^2}{\ell^2}\left(1 - \frac{r_h^d}{r^d}\right),
\end{equation}
maximizing the quantity
\begin{equation}
-r^{2(d - 1)}f(r)\approx \frac{r_h^{2d}}{\ell^2}(x - x^2), \qquad x\equiv \left(\frac{r}{r_h}\right)^d,
\end{equation}
for $r < r_h$ gives
\begin{equation}
r_f\approx \frac{r_h}{2^{1/d}}.
\end{equation}
Our starting assumption is consistent because, indeed,
\begin{equation}
f(r_f)\approx 1 - \frac{r_h^2}{2^{2/d}\ell^2}\approx -\frac{r_h^2}{2^{2/d}\ell^2}
\end{equation}
in the planar limit.  The growth rate is then
\begin{equation}
2\omega_{d-1}(\theta_0)r_f^{d-1}\sqrt{-f(r_f)}\approx \frac{\omega_{d-1}(\theta_0)r_h^d}{\ell}.
\end{equation}
When $d = 2$, the exact growth rate is
\begin{equation}
2\omega_1(\theta_0)r_f\sqrt{-f(r_f)} = \frac{2\theta_0 r_h^2}{\ell} = \frac{4\pi^2\ell^2 L}{\beta^2},
\end{equation}
which is consistent with the approximate result for $d > 2$.

\bibliographystyle{utphys}
\bibliography{refs}

\end{document}